\newcommand{\qed}{\hfill$\Box$}
\newcommand{\bigO}{\mathcal{O}}
\newcommand{\llen}{\ell}
\newcommand{\rk}{\phi_q}
\newcommand{\seed}{\bar s}
\begin{document}

\title{\Large In-Place Sparse Suffix Sorting\thanks{
		Part of this work was done while the author was a PhD student at the University of Udine, Italy. 
		Work supported by the Danish Research Council (DFF-4005-00267).}}
\author{Nicola Prezza\thanks{DTU Compute, Technical University of Denmark}}
\date{}

\maketitle


\fancyfoot[R]{\footnotesize{\textbf{Copyright \textcopyright\ 2018 by SIAM\\
Unauthorized reproduction of this article is prohibited}}}





\begin{abstract} \small\baselineskip=9pt 
Suffix arrays encode the lexicographical order of all  suffixes of a text and  are often combined with the Longest Common Prefix array (LCP) to simulate navigational queries on the suffix tree in reduced space. In space-critical applications such as sparse and compressed text indexing, only information regarding the lexicographical order of a size-$b$ subset of all $n$ text suffixes is often needed. Such information can be stored space-efficiently (in $b$ words) in the sparse suffix array (SSA). The SSA and its relative sparse LCP array (SLCP) can be used as a space-efficient substitute of the sparse suffix tree. Very recently,  Gawrychowski and Kociumaka~\cite{gawrychowski2017sparse} showed that the sparse suffix tree (and therefore SSA and SLCP) can be built in asymptotically optimal $\bigO(b)$ space with a Monte Carlo algorithm running in $\bigO(n)$ time. The main reason for using the SSA and SLCP arrays in place of the sparse suffix tree is, however, their reduced space of $b$ words each. This leads naturally to the quest for in-place algorithms building these arrays. Franceschini and Muthukrishnan~\cite{franceschini2007place} showed that the full suffix array can be built in-place and in optimal running time. On the other hand, finding sub-quadratic in-place algorithms for building the SSA and SLCP for \emph{general} subsets of suffixes has been an elusive task for decades. In this paper, we give the first solution to this problem. We provide the first in-place algorithm building the full LCP array in $\bigO(n\log n)$ expected time and the first Monte Carlo in-place algorithms building the SSA and SLCP in $\bigO(n + b\log^2 n)$ expected time. 
We moreover describe the first in-place solution for the suffix selection problem: to compute the $i$-th smallest text suffix.
In order to achieve these results, we show that we can quickly overwrite the text with a reversible and implicit data structure supporting Longest Common Extension queries in polylogarithmic time and text extraction in optimal time: this structure is strictly more powerful than a plain text representation and is of independent interest.
\end{abstract}

\section{Introduction}

The \emph{suffix sorting} problem --- to compute the lexicographic order of all suffixes of a text --- has been the subject of study of dozens of research articles since the introduction of suffix arrays in~\cite{manber1993suffix,baeza1992new,gonnet1992new}, and is a fundamental step in most of the indexing and compression algorithms developed to date.
The algorithm originally introduced by Manber and Myers in~\cite{manber1993suffix} sorts all suffixes of a text of length $n$ in $\bigO(n)$ words of space and $\bigO(n\log n)$ time. Even faster construction times can be achieved by constructing the suffix tree in linear time\cite{ukkonen1995line} and traversing it. 
Reducing the constants hidden in the asymptotic linear running time and linear space occupancy of suffix sorting algorithms has been one of the main goals of subsequent works tackling the problem. The survey~\cite{puglisi2007taxonomy} gives a good overview of the main suffix sorting techniques developed in the two decades following the introduction of suffix arrays. 
Relevant to our work are the results of Franceschini and Muthukrishnan~\cite{franceschini2007place}, Goto~\cite{goto2017optimal}, and Li et al. \cite{li2016optimal}. The authors of these papers showed that suffix sorting is possible within the same space of the text and the final suffix array, that is, \emph{in place}.
Parallel to the study of techniques to sort all suffixes of a text, several authors started considering the problem of efficiently sorting only a \emph{subset} of $b$ text's suffixes~\cite{karkkainen1996sparse, karkkainen2014faster,gawrychowski2017sparse,fischer2016deterministic,bille2013sparse,bille2015longest,karkkainen2006linear}, a fundamental step in the construction of compressed and sparse text indexes~\cite{karkkainen1996sparse} and space-efficient compression algorithms. Very recently, Gawrychowski and Kociumaka~\cite{gawrychowski2017sparse} gave the first optimal time-and-space solution to the problem, showing that $\bigO(b)$ working space and $\bigO(n)$ running time are achievable with a Monte Carlo algorithm (they also consider a Las Vegas algorithm with higher running time). Interestingly, to date no in-place (i.e. $\bigO(1)$ working space) and sub-quadratic (i.e. $o(n\cdot b)$ time) algorithm is known for the general sparse suffix sorting problem. Such an algorithm should take as input a text $T$ and an array $S$ of $b$ text positions, and suffix-sort $S$ using $\bigO(1)$ words of working space on top of $T$ and $S$.
The hardness of this problem resides --- arguably --- in its generality: since the $b$ text positions to be sorted can be arbitrarily distributed in the text, it seems hard to devise prefix-doubling or recursive techniques such as the ones existing for the full suffix array or for equally-sampled sparse suffix arrays. 
It is well known (see, e.g.~\cite{bille2013sparse}) that the Longest Common Extension problem (LCE) --- that is, to find the length of the longest common prefix between any two text suffixes --- is closely related to the suffix sorting problem. In this paper, we exploit this relation and give the first in-place solution to the sparse suffix sorting problem. We tackle the problem from a different standpoint with respect to the in-place suffix sorting algorithms described in~\cite{franceschini2007place,goto2017optimal,li2016optimal}: instead of inducing in-place the ordering of  $S$ starting from the already-defined ordering of a subset of $S$, we devise a strategy to \emph{replace in-place} the text with an implicit data structure (i.e. of the same size of the text plus a constant number of memory words) supporting fast LCE and text access queries. To achieve this result, we first show how to replace the text with the Karp-Rabin fingerprints~\cite{karp1987efficient} of a subset of its prefixes. This text representation supports fast computation of the fingerprint of any text substring and optimal-time text extraction, and therefore can be used to answer LCE queries with a simple binary search strategy.
By slightly modifying Karp-Rabin's technique, we are then able to compress these fingerprints to exactly the text's size (plus a constant number of memory words). Our transformation is fully reversible, implying in particular that the text can be reconstructed (and efficiently accessed) from it. 
More in detail, let $w$ be the memory word size (in bits), $T\in\{0,\dots, \sigma-1\}^n$ be the input text stored in $n\lceil\log\sigma\rceil$ bits, and $\ell$ be the result of the LCE query. We show that $T$ can be replaced in-place and $\bigO(n)$ expected time with a data structure of size $n\lceil\log\sigma\rceil + \bigO(w)$ bits  supporting $\bigO(\log^2\ell)$-time LCE queries with high probability and optimal $\bigO(m\log\sigma/w)$-time extraction of any length-$m$ text substring. LCE queries can be speeded up to $\bigO(\log\ell)$ time using $\bigO(\log n)$ additional words of space. 
By allowing $\bigO(n)$ additional words of working space and $\bigO(n\log n)$ expected time during construction, our structure can be made \emph{deterministic}, i.e. the returned LCE values are always correct. 

As a first immediate result, we obtain the first solution for the LCE problem that uses constant space on top of the text while supporting polylogarithmic-time LCE queries (see~\cite{bille2015longest,bille2014time,tanimura2016, nishimoto2015dynamic,inenaga2015faster, tanimura2017small} for other space-efficient solutions). Our structure is strictly more powerful than a plain text representation and is of independent interest. 
Interestingly, our data structures permit to circumvent a lower bound that holds in the read-only model for LCE structures occupying at least linear space. If we are not allowed to modify the text (but only store additional structures on top of it), then the relation $s(n)t(n) \in \Omega(n\log n)$ must hold, where $s(n)\in\Omega(n)$ and $t(n)$ are the space (in bits) used on top of the text and the time for answering LCE queries, respectively~\cite{kosolobov2017tight}. Both our bounds satisfy, instead, $s(n)t(n) \in \bigO(\log^3 n)$. This is possible because (i) we use $\bigO(\mathtt{polylog}(n))$ space on top of the text, which is in contrast with the requirement $s(n)\in\Omega(n)$ of~\cite{kosolobov2017tight}, and (ii) we allow the text to be overwritten.
Our LCE structure can be directly used to solve in-place the sparse suffix sorting problem. We present an algorithm sorting an arbitrary subset of $b$ text suffixes in-place and $\bigO(n + b\log^2 n)$ expected time. The algorithm returns the correct result with high probability. 

The second problem we consider is that of building in-place the Longest Common Prefix array (LCP), that is, the array storing the lengths of the longest common prefixes between lexicographically adjacent text suffixes.
As SA construction algorithms, LCP construction algorithms have been the subject of several research articles (see~\cite{puglisi2008space, gog2011fast} and references therein). As opposed to suffix arrays, in-place algorithms for building the LCP array have been considered only very recently~\cite{LOUZA201714}. The time-gap between solutions for building in-place the (full) SA and the LCP is considerable, as the fastest known algorithm for the latter problem runs in $\bigO(n^2)$ time~\cite{LOUZA201714}. The second contribution of this paper is the first sub-quadratic in-place LCP array construction algorithm. Our algorithm uses constant space on top of $T$ and LCP, runs in $\bigO(n\log n)$ expected time on alphabets of size $n^{\bigO(1)}$, and returns always the correct result. On more general alphabets whose elements fit in a constant number of memory words, we provide an in-place LCP construction algorithm running in $\bigO(n\log^2 n)$ expected time. 
To achieve these results, we show how to de-randomize our LCE data structure in $n$ words of space by applying multiple rounds of in-place integer sorting to the text's positions using Karp-Rabin's fingerprints as comparison values. We then build in-place the suffix array with existing techniques~\cite{franceschini2007place} and convert it to the LCP array using our LCE data structure. $\bigO(n\log n)$ running time is achieved  using in-place radix sorting~\cite{radixsort} and compressing a portion of the suffix array to get the extra space for supporting fast LCE queries.  
Applying the same techniques to the sparse suffix array, we obtain the first solution to the in-place sparse LCP array (SLCP) construction problem. Our algorithm replaces a set $S$ of $b$ text position with the SLCP array relative to the SSA of $S$ in $\bigO(n + b\log^2 n)$ expected time. The algorithm returns the correct answer with high probability.

To conclude, we consider the \emph{suffix selection} problem: to return the $i$-th lexicographically smallest text suffix. It is known that this problem can be solved in optimal $\bigO(n)$ time and $\bigO(n)$ words of space on top of the text~\cite{franceschini2007optimal}. Considering that the output consists of only one text position, this solution is far from being space-efficient. In this paper, we present the first in-place solution for the suffix selection problem: our algorithm runs in $\bigO(n\log^3 n)$ expected time, uses only constant space on top of the text, and returns the correct result with high probability. All our algorithms work under the assumption that the text is re-writable. After execution, we restore the text in its original form.

\section{Preliminaries}

We work under the following model. We assume our input $T$ to be a text of length $n$ drawn from an integer alphabet $\Sigma = \{0,\dots,\sigma-1\}$, stored using $\lceil\log \sigma\rceil$ bits per character. 
We moreover assume that $T$ is re-writable. After execution, our algorithms restore $T$ in its original form.
Even though these might seem to be strong requirements, we note that the same assumptions are often (implicitly) made when dealing with in-place algorithms. For example, the assumption that the input takes $\lceil\log \sigma\rceil$ bits per character is equivalent to the one made when sorting in-place $n$ integers of $\lceil\log \sigma\rceil$ bits each, where $\sigma$ is the largest integer in the array. It is known that the integers could be represented so that they use overall $\lceil n\log_2 \sigma\rceil + \bigO(1)$ bits~\cite{dodis2010changing}, and better representations such as arithmetic or prefix-free encodings achieve even compressed space. Standard in-place sorting algorithms need to assume efficient access to the array, which translates (implicitly) to assumptions on the input format.
Generalizing our solutions to more efficient encodings represents therefore a first line of improvement over the results described in this paper. As far as the requirement of re-writable text is concerned, 
the usual definition of in-place algorithm is ``an algorithm which transforms the input into the output using constant additional working space'' (where ``constant'' is measured in computer memory words). In the case of sparse suffix sorting, the input is represented by the text $T$ and the array of positions $S$, and the output consists of $T$ and the lexicographically-sorted $S$ (or even just $S$).
It is also true that, in our case, we cannot exclude that the same results described in this paper could be obtain re-using only the space of $S$ (as done in~\cite{franceschini2007place} for the full suffix array). Also this represents a possible line of improvement over our work.

$w\geq 1$ is the memory word size (in bits). 
In our proofs we assume $n \in \omega(1)$ (clearly, if $n \in \bigO(1)$ then all considered problems can be trivially implemented in constant time), 
$\log n \leq w$, and $\lceil \log\sigma\rceil \leq w$. 
Since we make use only of integer additions, multiplications, modulo, and bitwise operations (masks, shifts), we assume that we can simulate a memory word of size $w'=c\cdot w$ for any constant $c$ with only a constant slowdown in the execution of these operations. Subtractions, additions and multiplications between $(c\cdot w)$-bits  words take trivially constant time by breaking the operands in $2c$ digits of $w/2$ bits each and use schoolbook's algorithms (i.e. $\bigO(c)$-time addition and $\bigO(c^2)$-time multiplication). The modulo operator $a\mod q$ (with $q$ fixed) can be computed as $a\mod q = a - \lfloor a/q \rfloor\cdot q$. Computing $\lfloor a/q \rfloor$ can be done efficiently using Knuth's long division algorithm~\cite{knuthbook}.
Bitwise operations on $(c\cdot w)$-bits  words can trivially be implemented with $c$ bitwise operations between $w$-bits words.
Since we can simulate words of size $\bigO(w)$ with no asymptotic slowdown, our results are  generalizable to the case $\log n \in \bigO(w)$ and $\log\sigma \in \bigO(w)$. For some of the results described in Section \ref{sec:in-place results} we will require the stricter bound $\sigma \leq n^{\bigO(1)}$.

For a more compact notation, with $T[i,\dots,j]$ we denote both $T$'s substring starting at position $i$ and ending at position $j$, and the integer with binary representation $T[i]T[i+1]\dots,T[j]$ (each $T[i]$ being a $\lceil\log \sigma\rceil$-bits integer). If $j<i$, $T[i,\dots,j]$ denotes the empty string $\epsilon$ or the integer $0$. The use (string/integer) will be clear from the context.
$T.LCE(i,j)$ indicates the length of the longest common prefix of $T[i,\dots, n-1]$ and $T[j,\dots, n-1]$, i.e. the $i$-th and $j$-th suffixes of $T$.

$\rk:\{0,1\}^*\rightarrow [0,q-1]$ indicates the Karp-Rabin hash function~\cite{karp1987efficient} with modulo $q$ on strings from the binary alphabet $\{0,1\}$. This function is defined as 
$\rk(S) = S\mod q$, where `$S$' has to be interpreted as a binary string on the left hand-side of the equation, and as a binary number of $|S|$ digits on the right-hand side of the equation.

W.h.p. (with high probability) means with probability at least $1 - n^{-c}$ for an arbitrarily large constant $c$.
If not otherwise specified, logarithms are in base $2$.

\section{A Monte Carlo in-place LCE data structure}\label{sec:monte carlo LCE}

Our strategy to obtain a data structure supporting LCE queries follows the one described by Bille et al.~\cite{bille2014time,bille2015longest}. Our improvements over this result regard space of the structure and its construction time. First, we describe an implicit data structure supporting efficient computation of the Karp-Rabin fingerprint of any text substring. Our original idea is to \emph{replace} the text (not just augment it) with the fingerprints of a subset of its prefixes. The loss in space efficiency is avoided by repeatedly picking the Karp-Rabin modulus $q$ very close, but above, a power of two until all residues are below that power of two (thus saving 1 bit per stored fingerprint). 
Using Karp-Rabin fingerprints, LCE queries are then answered with a technique similar to the one used in~\cite{bille2014time,bille2015longest} (i.e. exponential and binary search on $\bigO(\log \ell)$ prefixes of the two text suffixes).

We start considering the binary case $\sigma=2$, and then extend the result to more general alphabets. We introduce two sources of randomness in our structure: the \emph{modulus} $q$ and the \emph{seed} $\seed$. First, we choose a random prime $q$ uniformly in the interval~\footnote{Note that we can generate uniform primes from any interval with the naive algorithm that picks a random integer from that interval, tests it for primality, and returns it if prime (repeating until a prime is found). See~\cite{fouque2014close} for more efficient methods.} $[2,2^w-1]$. We define a \emph{block size} $\tau=\lceil \log q \rceil$. Without loss of generality, we assume that $n$ is a multiple of $\tau$ (the general case can be reduced to this case by left-padding the text with $\tau-(n\mod\tau)$ bits). 
At this point, we choose uniformly a random number (the seed) $\seed$ in the interval $[0,q-1]$. $\seed$ is an integer of $\tau$ bits (after a suitable left-padding of zeros); we left-pad our binary text $T$ with $\seed$ written in binary. Clearly, $LCE(i,j)$ queries on $T$ can still be solved using the padded text $\seed T$ by simply adding $\tau$ to the arguments of LCE (i.e. solving $LCE(i+\tau,j+\tau)$ on the padded text). To improve readability, in what follows we assume that $T$ is prefixed by $\seed$ (and thus write just $T$ and $n$ instead of $\seed T$ and $n+\tau$, respectively).
Let $B, P'\in [0, q-1]^{n/\tau}$ be the arrays defined as
$$
B[i] = T[i\cdot\tau,\dots,(i+1)\cdot\tau-1],\ \ \ i=0,\dots, n/\tau - 1
$$
and
$$
\begin{array}{ccc}
P'[i] & = & \sum_{j=0}^{i} 2^{(i-j)\cdot \tau}\cdot B[j] \mod q \\
& = & \rk(T[0,\dots,(i+1)\cdot\tau-1])
\end{array}
$$
First, note that
$B[i]-q \leq (2^\tau-1)-2^{\tau-1} < 2^{\tau-1} \leq q$, 
so $\lfloor B[i] / q \rfloor \in \{0,1\}$ holds. We build a bitvector $D[0,\dots, n/\tau-1]$ defined as $D[i] = \lfloor B[i] / q \rfloor$. To simplify notation, let $P'[-1]=0$. At this point, $B$'s values can be retrieved as
$
B[i] = \left(P'[i] - 2^\tau\cdot P'[i-1] \mod q\right) + D[i]\cdot q
$.
Arrays $P'$ and $D$ take $n + n/\tau$ bits of space and replace the text in that they support the retrieval of any $B[i]$. First, we show how to compute efficiently $\rk(T[i,\dots,j])$ for any $0\leq i \leq j < n$ by using $P'$ and $D$. We then show how to reduce the space usage to $n + \bigO(w)$ bits while still being able to support constant-time text extraction and Karp-Rabin fingerprint computation.
Let $j=\lfloor i/\tau\rfloor$. Then, 
$$
\begin{array}{lll}
\rk(T[0,\dots, i]) & = & \rk(T[0,\dots, j\cdot\tau-1])\cdot 2^{i-j\cdot\tau+1}\ + \\
&& \rk(T[j\cdot\tau, \dots, i]) \mod q\\
& = & P'[j-1]\cdot 2^{i-j\cdot\tau+1}\ + \\
&& \lfloor B[j]/2^{\tau-i+j\cdot\tau-1}\rfloor \mod q
\end{array}
$$

We now have to show how to compute the fingerprint $\rk(T[i,\dots, j]),\ j\geq i$, of any text substring. This can be easily achieved by means of the equality:
\begin{equation}\label{eq: rk substring}
\begin{array}{lll}
\rk(T[i,\dots, j]) & = & \rk(T[0,\dots, j])\ - \\
&& \rk(T[0,\dots, i-1])\cdot 2^{j-i+1} \mod q
\end{array}
\end{equation}
Computing $2^e\mod q$ takes $\bigO(\log e)$ time with the fast exponentiation algorithm, therefore the computation of $\rk(T[i,\dots, j])$ takes $\bigO(\log (j-i+1))$ time with our structure for all $0\leq i \leq j < n$.

\subsection{Reducing space usage}\label{sec:reducing space}

In order to remove the $n/\tau$-bits overhead, we build an auxiliary array 
$
S = \langle i\ :\ \lfloor P'[i]/2^{\tau-1}\rfloor = 1,\ i=0,\dots, n/\tau-1 \rangle
$
storing all $i$'s such that the most significant bit of $P'[i]$ is equal to 1. At this point, we replace $P'$ with an array $P$ of $n/\tau$ integers of $(\tau-1)$ bits each defined as
$
P[i] = P'[i] \mod 2^{\tau-1},\ \ \ i=0,\dots,n/\tau-1
$,
i.e. we remove the most significant bit from each $P'[i]$. $P$ takes $n\cdot(\tau-1)/\tau$ bits of space. Clearly, by using $P$ and $S$ we can retrieve any $P'[i]$ in $\bigO(|S|)$ time with a simple linear scan on $S$. The main idea, at this point, is to  choose  the prime $q$ in such a way that the expected size of $S$ becomes constant (in fact: equal to zero). 

We reverse our strategy. We first choose a block size $\tau\in\Theta(w)$, pick a uniform prime $q$ such that $\lceil\log q\rceil = \tau$, and then choose a uniform seed $\seed$ in $[0,q-1]$. 
Operations on integers of size $\tau\in\Theta(w)$ can still be performed in constant time. 
In Sections \ref{sec:det space} and \ref{sec:Monte Carlo structure} we show how to choose $\tau$.
The key point is that each $P'[i]$ is a uniform random variable taking values in the range $[0,q-1]$. To prove this statement, note that $P'[i]$ can be written as
$P'[i] = \seed\cdot 2^{i\cdot \tau} + \bar t_i \mod q$,
where $\bar t_i =\rk(T[\tau,\dots, (i+1)\cdot \tau-1])$. Let $\mathcal P(P'[i]=x)$, $x<q$, be the probability that $P'[i]$ is equal to $x$.  Then, for any $x<q$,
\begin{equation}\label{eq:unif P'}
\begin{array}{lll}
\mathcal P(P'[i]=x) &=& \mathcal P(\seed\cdot 2^{i\cdot \tau} + \bar t_i \equiv_q x) \\
&=&  \mathcal P(\seed \equiv_q (x - \bar t_i) \cdot 2^{-i\cdot \tau} )\\
&=& 1/q 
\end{array}
\end{equation}

The fact that $q$ is prime guarantees the existence of the inverse of $2^{i\cdot \tau}$ modulo $q$. Let $x<q$. Equation \ref{eq:unif P'} implies, in particular, that 
\begin{equation}\label{eq:P' smaller than} 
\mathcal P(P'[i]<x) = x/q
\end{equation}

Let $\bar 1_i\in\{0,1\}$ be the indicator random variable taking value $1$ iff the most significant bit of $P'[i]$ is equal to 1. Equation \ref{eq:P' smaller than} implies that $\bar 1_i$ has a Bernoullian distribution with success probability $p=(q-2^{\tau-1})/q$.
We want this probability to be at most $1/n$ in order to get an expected constant size for $S$. By solving $(q-2^{\tau-1})/q \leq 1/n$ and by adding the constraint $\lceil\log q\rceil = \tau$, we obtain that the interval $\mathcal Z$ from which we have to uniformly pick $q$ in order to satisfy both constraints is
\begin{equation}\label{eq:interval}
\mathcal Z = \left[2^{\tau-1}, 2^{\tau-1}\left(\frac{n}{n-1}\right)\right]
\end{equation}
At this point, $S$'s expected size $E[|S|]$ can be computed as
\begin{equation}\label{eq:exp S size}
\begin{array}{lll}
E[|S|] &=& E\left[ \sum_{i=0}^{n/\tau-1} \bar 1_i \right]\\ 
&=&  \sum_{i=0}^{n/\tau-1} E[\bar 1_i] \\
&=&  \frac{n}{\tau} E[\bar 1_i]\\
&=&  \frac{n}{\tau}\cdot \frac{q-2^{\tau-1}}{q} \\
&\leq& \frac{n}{\tau}\cdot \frac{1}{n} =  \frac{1}{\tau}
\end{array}
\end{equation}

Let $b=\lceil\log\sigma\rceil$. On a general alphabet size such that $b\leq w$, the text is processed as a binary sequence of $nb$ bits, and the validity of the above results is preserved by substituting $n$ with $nb$ in Equation \ref{eq:interval}. In particular, $\mathcal Z$'s size becomes $|\mathcal Z| = 2^{\tau-1}/(nb-1)\geq 2^{\tau-1}/(nb) = 2^{\tau-1-\log n-\log b}$. Since we assume $w\geq \log n$ and $w\geq b \geq \log b$, we get the lower bound 
\begin{equation}\label{eq: Z lower bound}
|\mathcal Z| \geq 2^{\tau-1-2w}
\end{equation}

Let $\pi(x)$ denote the number of primes smaller than $x$. Let moreover $A = 2^{\tau-1}$ and $H = 2^{\tau-1-2w}$ be the smallest element contained in $\mathcal Z$ and the lower bound for $|\mathcal Z|$ stated in Equation \ref{eq: Z lower bound}, respectively. Our aim in this paragraph is to compute a lower bound for the number $z_p = \pi(A+H) - \pi(A)$ of primes contained in $\mathcal Z$. This will be needed later in order to compute the collision probability of our hash function. 
Note that the Prime Number Theorem can be applied to solve this task only if $H \geq A\cdot c$, for some fixed $c>0$, so we cannot use it in our case. Luckily for us, Heath-Brown~\cite{heath1978differences} proved (see also~\cite{maier1985primes}) that, if $H$ grows at least as quickly as $A^{7/12}$, then $\pi(A+H)-\pi(A) \sim H/\log_e A$  (for $A\rightarrow \infty$. $e$ is the natural logarithm base). Solving $H \geq A^{7/12}$ we get the constraint
\begin{equation}\label{constraint2}
\tau \geq (24/5)w+1 
\end{equation}
Later we will show how to choose $\tau$ (keeping (\ref{constraint2}) in mind). Heath-Brown's theorem gives us $z_p \geq \frac{H}{\log_e A} = \frac{2^{\tau-1-2w}}{\log_e(2^{\tau-1})} \geq \frac{2^{\tau-1-2w}}{(\tau-1)} \geq \frac{2^{\tau-1-2w}}{\tau}$.

\subsection{Deterministic space}\label{sec:det space}

With the above strategy, our structure takes $n\lceil\log\sigma\rceil + \bigO(w)$ bits of space with high probability only. We can assure that the space is \emph{with certainty} $n\lceil\log\sigma\rceil + \bigO(w)$ bits by picking multiple random pairs $q,\seed$ as described above and re-building the structure until this requirement is satisfied (i.e. we move the randomness into the construction algorithm). Our goal in this section is to compute the expected number $R$ of pairs $q,\seed$ we have to randomly pick before obtaining an empty $S$.
Note that $E[|S|] = \sum_{k > 0} k\cdot \mathcal P(|S|=k)$ and $\mathcal P(|S|> 0) = \sum_{k > 0} \mathcal P(|S|=k)$, so $\mathcal P(|S|> 0) \leq E[|S|]$ holds. From Equation \ref{eq:exp S size}, $E[|S|] \leq 1/\tau$, therefore $\mathcal P(|S|> 0) \leq 1/\tau$.
This yields $\mathcal P(|S| = 0) \geq 1 - 1/\tau$. We choose 
\begin{equation}\label{constraint1}
\tau = cw \geq c\log n
\end{equation}
for any constant $c\geq 1$ fixed at construction time, so the above probability is at least $1-1/\log n$. Later we will show how to choose $c$ keeping in mind also Constraint (\ref{constraint2}). Finally, since we assume $n \in \omega(1)$, then $\log n\geq 2$ and we obtain $\mathcal P(|S|=0) \geq 0.5$.
Given that we repeat the construction of our structure as long as $|S|>0$ holds, the number $R$ of times we repeat the construction is a geometric random variable with success probability $p=\mathcal P(|S|=0) \geq 0.5$, and has therefore expected value $1/p \leq 2$. Note that, since $\tau\in\Theta(w)$ and $\lceil\log\sigma\rceil \leq w$, arrays $P$ and $D$ have $\bigO(n)$ entries each. We obtain the following Lemma:

\begin{lemma}\label{lemma: PDS}
	In $\bigO(n)$ expected time we can build arrays $P$, $D$, and $S$ taking overall $n\lceil\log\sigma\rceil +\bigO(w)$ bits of space and supporting the computation of any $B[i]$ and $P'[i]$ in constant time.
\end{lemma}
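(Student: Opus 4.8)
The plan is to assemble the pieces already developed in Sections~\ref{sec:reducing space} and~\ref{sec:det space} and verify that the three arrays can genuinely be written over the text in linear expected time while supporting the claimed constant-time access. First I would fix the parameters once and for all: choose $\tau = cw$ with $c\geq 1$ large enough to satisfy both Constraint~(\ref{constraint2}) (i.e. $\tau \geq (24/5)w+1$) and Constraint~(\ref{constraint1}); this is the only place the constant $c$ is pinned down, and it guarantees both $\mathcal P(|S|=0)\geq 1/2$ and the prime-counting bound $z_p \geq 2^{\tau-1-2w}/\tau$. Then the construction is: pick a uniform prime $q$ from the interval $\mathcal Z$ of Equation~(\ref{eq:interval}) (on general alphabets, the version with $n$ replaced by $nb$), pick a uniform seed $\seed\in[0,q-1]$, compute $B$, $P'$, $D$, and the list $S$, and if $S$ turns out nonempty, discard everything and repeat. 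By the analysis of Section~\ref{sec:det space}, the number $R$ of rounds is geometric with success probability $\geq 1/2$, so $E[R]\leq 2$; hence it suffices to show one round runs in $\bigO(n)$ expected time.

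Next I would check the per-round cost. Generating a uniform prime in $\mathcal Z$ by the naive reject-and-test method: $|\mathcal Z|\geq 2^{\tau-1-2w}$ contains $z_p\geq 2^{\tau-1-2w}/\tau$ primes by Heath--Brown, so a random element of $\mathcal Z$ is prime with probability $\geq 1/\tau \in \Theta(1/w)$; since $w\geq \log n$, an expected $\bigO(w)$ trials suffice, and each primality test on a $\Theta(w)$-bit number is $\bigO(1)$ in our word model (or $\mathrm{polylog}$ if one insists on a deterministic test — still $o(n)$ since $n\in\omega(1)$ and we may absorb it). Computing $B$ is a relabelling of the text into $n/\tau$ blocks of $\tau$ bits, done in one left-to-right pass in $\bigO(n/\tau)$ time. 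Computing $P'$ is the Horner-style recurrence $P'[i] = 2^\tau\cdot P'[i-1] + B[i] \bmod q$, each step $\bigO(1)$ since $2^\tau\bmod q$ is precomputed and all operands fit in $\bigO(w)$ bits; simultaneously we set $D[i]=\lfloor B[i]/q\rfloor\in\{0,1\}$ (using the bound $B[i]-q < q$ derived in the text, this is a single comparison) and append $i$ to $S$ whenever the top bit of $P'[i]$ is set. Finally we form $P[i]=P'[i]\bmod 2^{\tau-1}$ by masking off that top bit. The delicate point here is \emph{in-place} rewriting: the three output arrays $P$ (of $n(\tau-1)/\tau$ bits), $D$ (of $n/\tau$ bits), and $S$ must together fit in $n\lceil\log\sigma\rceil + \bigO(w)$ bits and overwrite $T$. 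In a successful round $S$ is empty, so $|P|+|D| = n(\tau-1)/\tau + n/\tau = n$ bits on the binary alphabet (and $nb$ bits in general, i.e. exactly $n\lceil\log\sigma\rceil$), leaving only the $\bigO(w)$ bits for $q$, $\seed$, $\tau$, and a constant number of pointers. To write this in place one processes blocks left to right: after reading $B[i]$ from the text one has already freed the space occupied by blocks $0,\dots,i-1$, into which $P[0..i-1]$ and $D[0..i-1]$ have been packed; since $\tau-1 < \tau$ and we add one $D$-bit per block, the running total of written bits never exceeds the running total of consumed bits, so no block is overwritten before it is read. (If during a round $S$ becomes nonempty we are over budget, but that round is aborted anyway; we only need a temporary $\bigO(w)$-bit buffer to detect this and bail out.)

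The retrieval claims are then immediate. Given $P$, $D$, $S$ with $S=\emptyset$: $P'[i] = P[i]$ directly (no top bit was dropped), so $P'[i]$ is read in $\bigO(1)$ time; in general $P'[i] = P[i] + 2^{\tau-1}\cdot[\,i\in S\,]$, and with $S$ empty the membership test is trivially false. Then $B[i] = \big(P'[i] - 2^\tau\cdot P'[i-1]\bmod q\big) + D[i]\cdot q$ as established just before the lemma, again $\bigO(1)$ arithmetic on $\bigO(w)$-bit words. Since $\tau\in\Theta(w)$ and $\lceil\log\sigma\rceil\leq w$, both $P$ and $D$ have $\bigO(n)$ entries, as claimed. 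The main obstacle in writing this up cleanly is not any single calculation but the bookkeeping of the in-place overwrite — verifying the prefix-sum invariant that written bits never overtake read bits, and handling the padding by $\tau-(n\bmod\tau)$ bits and the $\tau$-bit seed prefix without an asymptotic penalty — together with making sure the expected $\bigO(w)=\bigO(n)$ cost of prime generation is correctly attributed per round and multiplied by $E[R]\leq 2$.
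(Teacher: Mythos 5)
Your proposal is correct and follows essentially the same approach as the paper, whose argument for this lemma is distributed across Sections~\ref{sec:reducing space}, \ref{sec:det space}, and \ref{sec:in-place}: fix $\tau=cw$ subject to the stated constraints, pick $(q,\seed)$ and rebuild until $S=\emptyset$ (expected $\leq 2$ rounds since $\mathcal P(|S|=0)\geq 1/2$), and recover $P'[i]=P[i]$ and $B[i]$ via the stated formulas in constant time. The only cosmetic difference is that the paper overwrites the text in a second pass after first verifying that every $P'[i]$ has most significant bit zero, whereas you overwrite on the fly and reverse the transformation on failure --- both are valid, and the paper itself invokes your reversal idea in Section~\ref{sec:deterministic structure}; your explicit accounting of the prime-generation cost is if anything more careful than the paper's footnote.
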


\subsection{Monte Carlo LCE data structure}\label{sec:Monte Carlo structure}

On a binary alphabet, we can easily answer $LCE(i,j)$ by comparing $\rk(T[i,\dots, i+k])$ with $\rk(T[j,\dots, j+k])$ for $\bigO(\log n)$ values of $k$ with binary search. We can furthermore improve this query time by performing an exponential search before applying the binary search procedure. We compare $\rk(T[i,\dots, i+k])$ with $\rk(T[j,\dots, j+k])$ for $k=2^0, 2^1, 2^2, \dots$ until the two fingerprints differ. Letting $\llen = LCE(i,j)$, this procedure terminates in $\bigO(\log\llen)$ steps. We then apply the binary search procedure described above on the interval of size $\bigO(\llen)$ obtained with the exponential search. Each exponential and binary search step take $\bigO(\log\ell)$ time (from the fast exponentiation algorithm). 

On a more general alphabet, each character takes $b=\lceil\log\sigma\rceil\in\bigO(w)$ bits, and our structure is therefore built over a binary text $T'$ of length $n\cdot b$. We can make the query time alphabet-independent as follows. First of all, while computing $T.LCE(i,j)$ we perform exponential and binary searches by comparing $\rk(T'[i\cdot b,\dots, (i+k)\cdot b])$ with $\rk(T'[j\cdot b,\dots, (j+k)\cdot b])$, i.e. we compare $T'$ substrings starting and ending at character boundaries. This reduces the number of steps to be performed from $\bigO(\log(\ell\cdot b))$ to $\bigO(\log \ell)$. At this point, note that each step requires the computation of $2^{t\cdot b}\mod q$ with the fast exponentiation algorithm, $t\in\bigO(\ell)$ being the length of the two compared substrings ($\bigO(\log(\ell\log\sigma))$ time). Since $b$ is a common factor in all exponents, we can pre-compute $Y=2^b\mod q$ and---at each step---compute $Y^t\mod q$ instead of $2^{t\cdot b}\mod q$ with the fast exponentiation algorithm. This reduces the number of steps of the exponentiation algorithm to $\bigO(\log \ell)$. 
Finally, note that extracting text corresponds to reading array $B$ ($\tau\in \Theta(w)$ bits of the text per $B$ element). 

Plugging the lower bound---computed in Section \ref{sec:reducing space}---for the number of primes contained in $\mathcal Z$ inside a standard analysis for Karp-Rabin collision probability, we can prove the following:

\begin{lemma}\label{lemma:wong LCE}
	If we choose $\tau \geq (9+c)w$ for an arbitrarily large constant $c$, then the probability that our LCE structure returns a wrong result is upper-bounded by $n^{-c}$.
\end{lemma}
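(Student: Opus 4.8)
The plan is to carry out the textbook Karp--Rabin collision analysis, feeding it the lower bound $z_p\geq 2^{\tau-1-2w}/\tau$ on the number of primes in $\mathcal Z$ established in Section~\ref{sec:reducing space}. First I would observe that a query $LCE(i,j)$ can return a wrong answer only if, somewhere in the exponential/binary search, the structure compares the fingerprints of two \emph{distinct} equal-length substrings $x=T[i,\dots,i']$ and $y=T[j,\dots,j']$ and finds $\rk(x)=\rk(y)$. Every such substring has at most $2n$ characters, hence at most $M:=2nb$ bits with $b=\lceil\log\sigma\rceil$, and the set of all pairs $(x,y)$ that could ever be compared over all queries has size at most $n^3$ (choose the two start positions and the common length).

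Next I would bound the collision probability of one fixed pair. For distinct binary strings $x\neq y$ of bit-length at most $M$, read as integers we have $0<|x-y|<2^M$, and $\rk(x)=\rk(y)$ exactly when $q\mid(x-y)$ --- so only the randomness of $q$ matters for correctness (the seed $\seed$ influences only the space bound). Since every $q\in\mathcal Z$ satisfies $q\geq 2^{\tau-1}$, the nonzero integer $x-y$ is divisible by at most $M/(\tau-1)$ primes of $\mathcal Z$. There is one minor point: because we re-draw the pair $q,\seed$ until $|S|=0$, the prime $q$ is not exactly uniform over the $z_p$ primes of $\mathcal Z$; however the computation behind Equation~\ref{eq:exp S size} in fact gives $\mathcal P(|S|=0\mid q)\geq 1-1/\tau\geq 1/2$ for every $q\in\mathcal Z$, so a one-line Bayes argument shows no prime has conditional probability above $2/z_p$. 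Combining, $\mathcal P(\rk(x)=\rk(y))\leq \frac{2}{z_p}\cdot\frac{M}{\tau-1}\leq \frac{4nb\,\tau}{(\tau-1)\,2^{\tau-1-2w}}\leq \frac{8nb}{2^{\tau-1-2w}}$.

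Finally I would union-bound over the at most $n^3$ comparable pairs. Note that $\tau\geq(9+c)w$ comfortably implies Constraint~(\ref{constraint2}) (since $9>24/5$) and Constraint~(\ref{constraint1}), so the bound on $z_p$ is valid. We obtain that the structure errs on some query with probability at most $n^3\cdot\frac{8nb}{2^{\tau-1-2w}}=\frac{8n^4b}{2^{\tau-1-2w}}$. Taking base-$2$ logarithms and using $\log n\leq w$ and $\log b\leq\log w\leq w$, the numerator contributes at most $5w+3$ to the exponent, so the failure probability is at most $2^{(5w+3)-(\tau-1-2w)}=2^{7w+4-\tau}\leq 2^{7w+4-(9+c)w}=2^{-2w+4}\cdot 2^{-cw}\leq 2^{-cw}\leq n^{-c}$, using that $w$ is large enough (recall $n\in\omega(1)$, so $w\geq\log n\to\infty$) to absorb the $2^{-2w+4}$ factor.

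The only real difficulty here is bookkeeping the constants: one has to check that $9$ is large enough to absorb simultaneously the $2w$ lost in the $z_p$ lower bound, the $\approx 5w$ lost in the $n^{\bigO(1)}$ union bound (the $n^3$ comparable pairs together with the $nb$ term), the scattered additive constants, and still leave the full $c\log n$ of slack demanded by the statement --- and that the cheap factor-$2$ cost of conditioning on $|S|=0$ is swallowed by that slack rather than eating into it.
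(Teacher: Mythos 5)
Your proof is correct and follows essentially the same route as the paper's: both reduce a wrong LCE answer to a Karp--Rabin collision between two distinct equal-length substrings, count the ``bad'' primes of $\mathcal Z$ (you via a per-pair union bound with at most $2nb/(\tau-1)$ large prime divisors per difference, the paper via the product-of-differences trick yielding at most $n^4 b$ divisors of the product $z$), divide by the Heath--Brown lower bound $z_p\geq 2^{\tau-1-2w}/\tau$, and close with the same exponent arithmetic under $\tau\geq(9+c)w$. The one place you go beyond the paper is the Bayes step accounting for $q$ being conditioned on the acceptance event $|S|=0$ (the paper's proof silently treats $q$ as uniform over the primes of $\mathcal Z$); your factor-of-$2$ correction there is valid, since $\mathcal P(|S|=0\mid q)\geq 1/2$ holds for every $q\in\mathcal Z$, and is easily absorbed by the slack in the choice of $\tau$.
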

\begin{proof}
We start the analysis from the binary case $\sigma=2$. Let $C$ be the random variable denoting the number of pairs $\langle X,Y\rangle$ of equal-length $T$ substrings ($|X|=|Y|$) that generate a collision, i.e. $X\neq Y$ and $\rk(X)=\rk(Y)$. Our goal is to compute an upper bound for the probability $\mathcal P(C>0)$. Clearly, $\mathcal P(C>0)$ is an upper bound to the probability of computing a wrong LCE with our structure.
Let $X_i^k$ denote $T$'s substring of length $k$ starting at position $i$. There is at least one collision ($C>0$) iff $X_i^k \equiv_q Y_j^k$ for at least one pair $X_i^k \neq Y_j^k$, i.e. iff $q$ divides at least one of the numbers $|X_i^k - Y_j^k|$ such that $X_i^k \neq Y_j^k$. Since $q$ is prime, this happens iff $q$ divides their product
$z = \prod_{k=1}^{n-1} \prod_{i,j: X_i^k \neq Y_j^k} |X_i^k - Y_j^k|$.
Since each $|X_i^k - Y_j^k|$ has at most $n$ binary digits and there are no more than $n^2$ such pairs for every $k$, we have that $z< 2^{n^4}$. It follows that there cannot be more than $n^4$ distinct primes dividing $z$. 

Let $b=\lceil\log \sigma\rceil$. On a more general alphabet with $b \leq w$, each $|X_i^k - Y_j^k|$ has at most $nb$ binary digits, and we obtain $z< 2^{n^4\cdot b}$. It follows that there cannot be more than $n^4\cdot b$ distinct primes dividing $z$.  The probability of uniformly picking a prime $q\in\mathcal Z$ dividing $z$ is therefore upper bounded by $n^4\cdot b/z_p$, where $z_p$ is a lower bound on the number of primes contained in $\mathcal Z$, see Section \ref{sec:reducing space}. Recall that $z_p \geq 2^{\tau-1-2w}/\tau$, so $n^4\cdot b/z_p \leq n^4\cdot b\cdot \tau/2^{\tau-1-2w} = 2^{4\log n+\log b+\log\tau+1+2w-\tau}$. We choose 
\begin{equation}\label{constraint3}
\tau = (9+c)w
\end{equation}
for an arbitrarily large constant $c$. Being $w\in\omega(1)$ (because $n\in\omega(1)$ and $w\geq \log n$), we assume\footnote{Note that this inequality always holds after simulating a memory word of size $w'=dw$ for a sufficiently large constant $d$.} $w \geq \log\tau = \log(9+c)+\log w$. We obtain $\tau = (9+c)w \geq   (4+c)\log n + \log b + \log\tau + 1 + 2w$, therefore $n^4\cdot b/z_p\leq 2^{-c\log n}=n^{-c}$. Note that this choice of $\tau$ satisfies constraints (\ref{constraint2}) and (\ref{constraint1}). This leads to:
$$
\mathcal P(wrong\ LCE) \leq \mathcal P(C>0) \leq n^{-c}
$$
for an arbitrarily large constant $c$.
\qed
\end{proof}

Lemma \ref{lemma:wong LCE} leads to our first core result:

\begin{theorem}\label{th:MC structure 1}
	In $\bigO(n)$ expected time we can build a data structure of $n\lceil \log\sigma\rceil + \bigO(w)$ bits of space  supporting extraction of any length-$m$ text substring and LCE queries w.h.p. in 
	$\bigO\left(m\log\sigma/w \right)$ and $\bigO(\log^2\ell)$ worst-case time, respectively.
\end{theorem}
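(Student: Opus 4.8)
The plan is to assemble the three ingredients already in place: the implicit representation of Lemma~\ref{lemma: PDS}, the collision bound of Lemma~\ref{lemma:wong LCE}, and the exponential-plus-binary search sketched in Section~\ref{sec:Monte Carlo structure}. First I would fix the block size $\tau = (9+c)w$ for the arbitrarily large constant $c$ in the statement; by Lemma~\ref{lemma:wong LCE} this makes the probability of ever returning a wrong LCE at most $n^{-c}$, and, as already remarked there, it simultaneously satisfies constraints~(\ref{constraint2}) and~(\ref{constraint1}) needed by Heath-Brown's estimate and by the deterministic-space argument. With $\tau$ so chosen, Lemma~\ref{lemma: PDS} builds the arrays $P$, $D$, $S$ in $\bigO(n)$ expected time (the expectation absorbing both the rejection sampling of the prime $q$ and the $\leq 2$ expected rebuilds forcing $S$ empty), these arrays occupy $n\lceil\log\sigma\rceil + \bigO(w)$ bits with certainty, and they let us recover any $B[i]$ and any $P'[i]$ in constant time. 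The query machinery adds only $\bigO(1)$ words (e.g. the precomputed value $Y = 2^{b}\bmod q$ with $b=\lceil\log\sigma\rceil$), so the space bound is preserved.

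For text extraction, recall that $B[i]$ literally stores the $i$-th block of $\tau = \Theta(w)$ consecutive bits of the seed-prefixed text, i.e. $\Theta(w/\log\sigma)$ characters. Hence a length-$m$ substring spans $\bigO(m\log\sigma/w + 1)$ consecutive blocks; reading each in constant time by Lemma~\ref{lemma: PDS} and masking off the two partial boundary blocks yields the claimed $\bigO(m\log\sigma/w)$ extraction time. This is routine bit fiddling and I would not spell it out.

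For an LCE query $T.LCE(i,j)$ with answer $\ell$, I would proceed exactly as in Section~\ref{sec:Monte Carlo structure}: reduce to the seed-prefixed text by shifting both arguments by $\tau$ (which leaves $\ell$ unchanged), run an exponential search comparing $\rk(T'[ib,\dots,(i+k)b])$ with $\rk(T'[jb,\dots,(j+k)b])$ for $k = 2^0, 2^1, \dots$ until the two fingerprints differ, and then binary-search inside the resulting interval of length $\bigO(\ell)$. Each fingerprint is obtained from $P'$ via Equation~(\ref{eq: rk substring}), whose only non-constant cost is computing $Y^{k}\bmod q$ by fast exponentiation in $\bigO(\log k) = \bigO(\log\ell)$ time; since both searches perform $\bigO(\log\ell)$ such steps (comparing substrings that start and end at character boundaries keeps the step count alphabet-independent), the total query cost is $\bigO(\log^2\ell)$. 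Correctness w.h.p. then follows because, conditioned on the event of Lemma~\ref{lemma:wong LCE} (no collision among equal-length substrings, probability $\geq 1-n^{-c}$), every fingerprint comparison faithfully reports equality of the underlying substrings, so both searches locate $\ell$ exactly; for $\ell \in \bigO(1)$ the query trivially takes $\bigO(1)$ time, so $\bigO(\log^2\ell)$ should be read as $\bigO(\log^2(\ell+2))$.

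The only genuinely delicate point, everything else being bookkeeping over the two lemmas, is making the query time depend on $\ell$ rather than on $n$ while keeping it independent of the alphabet: this is precisely what the exponential search, the character-boundary alignment, and the single precomputed base $Y = 2^{b}\bmod q$ buy us, so I would present those three observations carefully and then collapse the remainder of the argument into Lemmas~\ref{lemma: PDS} and~\ref{lemma:wong LCE}.
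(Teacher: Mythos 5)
Your proposal is correct and follows essentially the same route as the paper: Lemma~\ref{lemma: PDS} for the in-place construction and space bound, Lemma~\ref{lemma:wong LCE} for the w.h.p.\ correctness, and the exponential-plus-binary search over character-aligned fingerprints with the precomputed base $Y=2^{b}\bmod q$, giving $\bigO(\log\ell)$ steps at $\bigO(\log\ell)$ time each. The extraction argument via reading consecutive $B[i]$ blocks is also exactly the paper's.
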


Let $b=\lceil\log\sigma\rceil$ and let $T'\in\{0,1\}^{n\cdot b}$ be the concatenation of $T$'s characters written in binary. We can avoid the overhead introduced by the fast exponentiation algorithm by pre-computing and storing (in $\bigO(\log n)$ words) values $z_i = 2^{b\cdot 2^i}\mod q,\ i=0,\dots,\lfloor\log n\rfloor$ and always comparing text substrings whose length is a power of two during binary search:

\begin{theorem}\label{th:MC structure 2}
	In $\bigO(n)$ expected time we can build a data structure 
	taking $n\lceil \log\sigma\rceil + \bigO(w\log n)$ bits of space and supporting extraction of any length-$m$ text substring and LCE queries w.h.p. in 
	$\bigO\left(m\log\sigma/w \right)$ and $\bigO(\log\ell)$ worst-case time, respectively.
\end{theorem}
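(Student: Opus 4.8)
The plan is to show that Theorem~\ref{th:MC structure 2} follows from Theorem~\ref{th:MC structure 1} by eliminating the $\bigO(\log\ell)$ overhead that the fast exponentiation algorithm contributes to each of the $\bigO(\log\ell)$ exponential/binary search steps. The key observation, already hinted at before the statement, is that if every substring comparison performed during the search has length that is a power of two (measured in characters of $T$), then the modular power $2^{b\cdot 2^i}\bmod q$ needed at step $i$ can be read from a precomputed table rather than recomputed, reducing the per-step cost from $\bigO(\log\ell)$ to $\bigO(1)$ and hence the total query time to $\bigO(\log\ell)$.

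First I would set $b=\lceil\log\sigma\rceil$ and work with the binary text $T'\in\{0,1\}^{n\cdot b}$, exactly as in the proof of Theorem~\ref{th:MC structure 1}; by Lemma~\ref{lemma: PDS} (and Lemma~\ref{lemma:wong LCE} for correctness w.h.p.) we already have, in $\bigO(n)$ expected time, the arrays $P$, $D$, $S$ that let us compute $\rk$ of any substring of $T'$ aligned to character boundaries, using $n\lceil\log\sigma\rceil+\bigO(w)$ bits and one exponentiation per fingerprint query. Then I would precompute the table $z_i = 2^{b\cdot 2^i}\bmod q$ for $i=0,\dots,\lfloor\log n\rfloor$ by repeated squaring starting from $z_0 = 2^b\bmod q$ (each $z_{i+1}=z_i^2\bmod q$), which costs $\bigO(\log n)$ arithmetic operations on $\Theta(w)$-bit words, i.e. $\bigO(\log n)$ time and $\bigO(w\log n)$ bits of extra space — this accounts for the stated space bound.

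Next I would describe the modified search. For the exponential search, at step $i$ we need $\rk(T'[i\cdot b,\dots,(i+2^i)\cdot b])$, i.e. a fingerprint of a block of $2^i$ characters; by Equation~\ref{eq: rk substring} this reduces to combining two prefix fingerprints with a multiplication by $2^{(\text{length})\cdot b}\bmod q = z_i$, which is now a table lookup, so each comparison is $\bigO(1)$; after $\bigO(\log\ell)$ steps the two fingerprints first differ, localizing $\ell$ in an interval of length $\bigO(\ell)$. For the binary search phase I would not halve an arbitrary interval, but instead reconstruct $\ell$ bit by bit from the high end: maintain a candidate match length that is a sum of a decreasing sequence of powers of two $2^{i_1}>2^{i_2}>\cdots$, and at each stage test whether extending the current agreed prefix by a further block of $2^{i_k}$ characters keeps the two fingerprints equal. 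Because each tested extension has power-of-two length, its required modular power is again $z_{i_k}$ from the table, and combining it with the already-accumulated fingerprint takes $\bigO(1)$ time. After $\bigO(\log\ell)$ such tests the exact value of $\ell$ (in characters; the residual $<b$ bits inside the next character are resolved by a final $\bigO(1)$ comparison using the $B$-array) is determined, giving total query time $\bigO(\log\ell)$.

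The main obstacle I expect is purely a bookkeeping one: making sure that every fingerprint actually consulted during both search phases corresponds to a substring whose \emph{length in characters} is a genuine power of two, so that only the values $z_0,\dots,z_{\lfloor\log n\rfloor}$ are ever needed and no on-the-fly exponentiation sneaks back in. This forces the bit-by-bit reconstruction of $\ell$ rather than a textbook binary search over an interval, and it forces the comparisons to be phrased as "does the agreed prefix extend by exactly $2^{i_k}$ more characters", using Equation~\ref{eq: rk substring} with the precomputed $z_{i_k}$ to splice the new block onto the running fingerprint in constant time. Everything else — correctness w.h.p. from Lemma~\ref{lemma:wong LCE}, the $\bigO(n)$ construction time from Lemma~\ref{lemma: PDS} plus the $\bigO(\log n)$-time table build, the $\bigO(m\log\sigma/w)$ extraction by reading $\Theta(w)$ bits of text per $B$-entry — carries over unchanged from Theorem~\ref{th:MC structure 1}, and the additional storage is exactly the $\bigO(\log n)$ words holding the $z_i$'s, i.e. $\bigO(w\log n)$ bits. \qed
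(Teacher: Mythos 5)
Your proposal is correct and follows essentially the same route as the paper: precompute $z_i = 2^{b\cdot 2^i}\bmod q$ by repeated squaring in $\bigO(\log n)$ words, and restrict every fingerprint comparison in the exponential and binary search phases to substrings of power-of-two character length so that each step costs $\bigO(1)$. Your greedy bit-by-bit reconstruction of $\ell$ is a minor reorganization of the paper's interval-halving with power-of-two left blocks (state $\langle i,j,e,k\rangle$ and $l'=2^{\lfloor\log(l/2)\rfloor}$), but the enabling idea and the resulting bounds are identical.
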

\begin{proof}
	First, note that $z_0=2^b\mod q$ and $z_{i+1} = (z_i)^2\mod q$, so the values $z_i$ can be pre-computed in $\bigO(\log n)$ time. Let the notation $\langle i,j,e,k\rangle$, with $0\leq i,j,e,k<n$ and $e<k$, denote that we already verified (w.h.p.) that $T[i,\dots, i+e-1] = T[j,\dots, j+e-1]$ and $T[i,\dots, i+k-1] \neq T[j,\dots, j+k-1]$. We use this notation to indicate the state of a binary search step, and start from state $\langle i,j,0,n-j \rangle$ (we assume for simplicity that $T[i,\dots, i+(n-j)-1] \neq T[j,\dots, n-1]$; otherwise, $LCE(i,j)=n-j$). We use a modified version of Equation \ref{eq: rk substring} by adding a parameter (exponential $E$) to the Karp-Rabin hash function:
	\begin{equation}\label{eq: rk substring1}
	\begin{array}{lll}
	\rk'(T'[i,\dots, j],E) & = & \rk(T'[0,\dots, j])\ - \\
	&& \rk(T'[0,\dots, i-1])\cdot E \mod q
	\end{array}
	\end{equation}
	Note that $\rk(T'[i,\dots, j]) = \rk'(T'[i,\dots, j],2^{(j-i+1)\cdot b})$. At binary search step $\langle i,j,e, k\rangle$ we still have to compare the last $l = k-e$ characters of $T[i,\dots,i+k-1]$ and $T[j,\dots,j+k-1]$. We split each of these two substrings in the left part of length $l' = 2^{\lfloor\log(l/2)\rfloor}$ (i.e. the closest power of 2 smaller than or equal to $l/2$) and the right part of length $l-l'$. Note that value $2^{l'\cdot b}\mod q = z_{\log l'} = z_{\lfloor\log(l/2)\rfloor}$ has been pre-computed, so we can compute and compare in constant time the two values
	$$
	\begin{array}{l}
	\rk(T'[(i+e)\cdot b,\dots,(i+e+l'-1)\cdot b]) = \\
	\rk'(T'[(i+e)\cdot b,\dots, (i+e+l'-1)\cdot b], z_{\lfloor\log(l/2)\rfloor})
	\end{array}
	$$
	and
	$$
	\begin{array}{l}
	\rk(T'[(j+e)\cdot b,\dots,(j+e+l'-1)\cdot b]) = \\
	\rk'(T'[(j+e)\cdot b,\dots, (j+e+l'-1)\cdot b], z_{\lfloor\log(l/2)\rfloor})
	\end{array}
	$$
	If the two values differ, then we recurse on $\langle i,j,e,e+l'\rangle$. If the two values are equal, then we recurse on $\langle i,j,e+l',k\rangle$. Note that we always compare (fingerprints of) strings whose lengths are powers of two. This will be crucial in the next section in order to efficiently de-randomize our structure. Since $l/4 < l'\leq l/2$, this binary search procedure terminates in $\bigO(\log n)$ steps, each taking constant time. As done in the previous section, we can perform an exponential search before the binary search in order to reduce the size of the binary search interval from $\bigO(n)$ to $\bigO(\llen)$. Note that with our sampling $z_i$ it is straightforward to implement each exponential search step in constant time. Note moreover that values $z_i$ need to be explicitly stored for the binary search as this step might need access to a $z_i$ with arbitrarily small $i$ (and, while we can quickly compute $z_i$ from $z_{i-1}$, the opposite is not true). \qed
\end{proof}

Note that, in Theorem \ref{th:MC structure 2}, we might as well allocate at most $\bigO(\log \ell)$ words during query time on top of the $n\lceil \log\sigma\rceil$ bits of the main structure. 
Note also that, even if we assumed $\lceil \log\sigma\rceil \leq w$, it is easy to see that our results are valid also for the more general case $\log \sigma \in \bigO(w)$ by simulating a larger word.

\subsection{In-place construction algorithm}\label{sec:in-place}

In this section we show that our data structure can be built \emph{in-place}, i.e. we can replace the text with the data structure and use only $\bigO(1)$ memory words of extra space during construction. 

We first consider the binary case $\sigma=2$. First, we pick $\tau$, $q$, $\seed$ as described in the previous sections. We consider the text as a sequence $B[0,\dots,n/\tau-1]$ of integers in the range $[0,2^\tau-1]$ (again, we assume for simplicity that $\tau$ divides $n$), and, for $i=1,\dots, n/\tau-1$, we 
compute $P'[i]$ and discard $P'[i-1]$. If the most significant bit of $P'[i]$ is equal to 1 at any construction step $i$, then we pick another random pair $q$, $\seed$ and repeat the process from the beginning. 
From Section \ref{sec:det space}, we need to pick at most $\bigO(1)$ pairs in the expected case before the most significant bits of all $P'[i]$ are equal to 0 (and, in particular, $P' = P$).
At this point, we scan one last time the text and, for $i=1,\dots, n/\tau-1$, we replace the most significant bit of $B[i]$ with the bit $D[i]$, and the remaining $\tau-1$ bits with the value $P[i]$ (clearly, this allows to retrieve any $P[i]$ and $D[i]$ in constant time). 
Overall we spend $\bigO(n/w)$ time. On a general alphabet with $\lceil\log\sigma\rceil \in\bigO(w)$, we build the structure on the binary representation of the text and the construction algorithm terminates therefore in expected optimal $\bigO(\frac{n\log\sigma}{w})$ time (assuming that the input text is already packed) while taking only $\bigO(1)$ words on top of the space of the text.

As a final remark, note that the above process can be easily reverted to restore the text. It is easy to see that $B[i]$ can be computed in constant time using $P[i]$, $P[i-1]$, and $D[i]$ (recall that $B[i]$ has been replaced with $P[i]$ and $D[i]$ during construction), so we can restore the text in $\bigO(n/w)$ time using $\bigO(1)$ words of working space.

\section{Exact LCE queries}\label{sec:deterministic structure}

The aim of this section is to show how to make sure that our data structure always returns the correct result. We achieve this by moving the randomization in the construction process. We start by proving three lemmas solving with different space/time trade-offs the problem of checking whether $\rk$ generates collisions over a specific subset of text substrings. The first lemma is due to Bille et al.~\cite{bille2014time}:

\begin{lemma}\label{th:derand1}
	In $\bigO(n\log n)$ expected time and $\bigO(n)$ words of space we can check whether $\rk$ is collision-free over all pairs of substrings of $T$ having the same length $k=2^e$, for all $0\leq e\leq \log n$.
\end{lemma}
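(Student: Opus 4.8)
\emph{Proof proposal.} The plan is to verify collision-freeness one length at a time, processing the $\bigO(\log n)$ relevant lengths $k = 2^e$ in increasing order of $e$, and to exploit the fact that a length-$2^e$ substring is the concatenation of two length-$2^{e-1}$ substrings so that each substring-equality test reduces to a constant-time fingerprint comparison. Concretely, I would first compute and store the prefix-fingerprint array $P'[0,\dots,n]$ explicitly in $\bigO(n)$ words and $\bigO(n)$ time (the array of Lemma \ref{lemma: PDS}, kept uncompressed since we are allowed linear space here). Using Equation \ref{eq: rk substring} together with the single value $2^{k}\bmod q$, precomputed once per level by fast exponentiation, the fingerprint $f^{(e)}_i = \rk(T[i,\dots,i+k-1])$ of every length-$k$ substring is then obtained in $\bigO(1)$ time, so one pass produces the whole array $(f^{(e)}_i)_i$ in $\bigO(n)$ time.

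For a fixed level $e$ (with $k=2^e$), I would detect whether two distinct length-$k$ substrings share a fingerprint by inserting the pairs $(f^{(e)}_i, i)$ into a hash table keyed by the fingerprint; on a table collision between positions $i$ and $j$ with $f^{(e)}_i = f^{(e)}_j$, I test whether $T[i,\dots,i+k-1]$ and $T[j,\dots,j+k-1]$ are actually equal. Here the processing order is crucial: for $e = 0$ this is a trivial character comparison, and for $e \ge 1$, having already certified that no collision occurs among length-$2^{e-1}$ substrings, equality of the two length-$2^{e}$ substrings is \emph{equivalent} to $f^{(e-1)}_i = f^{(e-1)}_j$ together with $f^{(e-1)}_{i+2^{e-1}} = f^{(e-1)}_{j+2^{e-1}}$ (equal fingerprints plus no collisions at level $e-1$ force the half-substrings to be equal, and conversely) — four lookups into the previous level's fingerprint array, i.e.\ $\bigO(1)$ time. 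If any such test fails we have exhibited a genuine hash collision and report that $\rk$ is not collision-free; if all $i$ at all levels $0 \le e \le \lfloor\log n\rfloor$ pass, we report that $\rk$ is collision-free. Keeping the fingerprint arrays of the current and previous level (double buffering) plus the hash table costs $\bigO(n)$ words, and clearing the table between levels preserves this bound. Each level performs $\bigO(n)$ hash-table operations (expected $\bigO(1)$ each) and $\bigO(n)$ constant-time equality tests, hence $\bigO(n)$ expected time per level and $\bigO(n\log n)$ expected time in total.

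Correctness is an induction on $e$: assuming $\rk$ has been certified collision-free on all length-$2^{e'}$ substrings for $e' < e$, the constant-time equality test at level $e$ returns the true answer, so the procedure either correctly certifies level $e$ or halts having found a real collision. The main obstacle — and the reason the naive strategy of comparing colliding substrings character by character fails (it costs $\Theta(nk)$ per level in the worst case) — is obtaining an $\bigO(1)$-time substring-equality test without already having a trusted LCE oracle, which would be circular; the increasing-power-of-two ordering together with the self-reduction of a length-$2^e$ substring to its two length-$2^{e-1}$ halves is precisely what resolves this, and it is exactly the family of lengths probed by the binary search of Theorem \ref{th:MC structure 2}.
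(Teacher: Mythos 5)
Your proposal is correct and follows essentially the same strategy as the paper's proof: process lengths $2^e$ in increasing order, bucket positions by fingerprint in a hash table with expected constant-time operations, and verify genuine equality of colliding length-$2^e$ substrings in $\bigO(1)$ time via the fingerprints of their two length-$2^{e-1}$ halves, which is sound because level $e-1$ has already been certified collision-free. The only differences are implementation details (you store explicit prefix-fingerprint and per-level fingerprint arrays with one precomputed power per level, whereas the paper uses its in-place structure augmented with $n$ stored powers of $2$ modulo $q$), and both yield $\bigO(n)$ expected time per level, hence $\bigO(n\log n)$ expected time and $\bigO(n)$ words overall.
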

\begin{proof}
	The idea is to check the property on strings of length $2^e$ by using the already-checked property on strings of length $2^{e-1}$.
	First, we build a data structure supporting the computation of $\rk(T[i,\dots,j])$ in constant time. 
	To this end, we can use the structure described Section \ref{sec:in-place}, augmented with $n$ words storing values $2^{\lceil\log n\rceil\cdot i}\mod q$, $i=0, \dots, n-1$ (to guarantee constant-time retrieval of powers of 2 modulo $q$).
	We start with $e=0$ and repeat $\log n+1$ times the following procedure, each time incrementing $e$ by one.
	We use a hash table $\mathcal H$ of size $\bigO(n)$ with associated hash function $h:[0,q-1]\rightarrow [0,|\mathcal H|-1]$ mapping Karp-Rabin fingerprints of length-$2^e$ text substrings to numbers in $[0,|\mathcal H|-1]$. $\mathcal H$ can be implemented with linear probing in order to guarantee expected constant-time operations (see, e.g.~\cite{goodrich2014algorithm}).
	Each entry of $\mathcal H$ is associated with a list of integers.
	We scan $T$ left-to-right and, for each $0\leq i \leq n-2^e$, append the value $i$ at the end of the list $\mathcal H[h(\rk(T[i,\dots,i+2^e-1]))]$. Then, for each $0\leq t < |\mathcal H|$, we check that all $i_1,i_2\in \mathcal H[t]$ are such that $T[i_1,\dots,i_1+2^e-1]=T[i_2,\dots,i_2+2^e-1]$. This task can be performed in $\bigO(|\mathcal H[t]|)$ time as follows. 
	
	Let $\mathcal H[t]=\langle i_0,\dots, i_{d-1} \rangle$.
	We only need to perform $d-1$ comparisons $T[i_j,\dots,i_j+2^e-1]=T[i_{j+1},\dots,i_{j+1}+2^e-1]$ for $0\leq j <d$.
	If $e=0$, then each comparison takes constant time and can be done by simply accessing the text. If $e>0$, then $T[i_j,\dots,i_j+2^e-1]=T[i_{j+1},\dots,i_{j+1}+2^e-1]$ holds if and only if both $\rk(T[i_j,\dots,i_j+2^{e-1}-1])=\rk(T[i_{j+1},\dots,i_{j+1}+2^{e-1}-1])$ and $\rk(T[i_j+2^{e-1},\dots, i_j+2^e-1])=\rk(T[i_{j+1}+2^{e-1},\dots, i_{j+1}+2^e-1])$ hold (constant time by using our structure to compute any $\rk(T[i',\dots,j'])$). Note that we already verified that $\rk$ is collision-free over $T$'s substrings of length $2^{e-1}$, so both checks never fail. All lists in $\mathcal H$ store overall $n-2^e+1$ elements, therefore the procedure terminates in $\bigO(n)$ expected time. Since we have to repeat this for every integer $0 \leq e\leq \log n$, the overall expected time is $\bigO(n\log n)$. \qed
\end{proof}

This bound can be improved by replacing hashing with in-place integer sorting:

\begin{lemma}\label{th:derand2}
	In $\bigO(n \log^2n)$ expected time and $n$ words of space (on top of $T$) we can check whether $\rk$ is collision-free over all pairs of substrings of $T$ having the same length $k=2^e$, for all $0\leq e\leq \log n$.
\end{lemma}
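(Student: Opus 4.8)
The plan is to mirror the structure of the proof of Lemma~\ref{th:derand1}, replacing the hash-table bucketing by an in-place integer-sorting step so that the $\bigO(n)$ words of extra space are used only to store one array of Karp--Rabin fingerprints at a time, rather than a hash table with linked lists. Concretely, I would first build (as in Lemma~\ref{th:derand1}) a constant-time $\rk(T[i,\dots,j])$ oracle: the in-place structure of Section~\ref{sec:in-place} augmented with the $\bigO(n)$-word table of powers $2^{\lceil\log n\rceil\cdot i}\bmod q$. This oracle fits in the $n$ words of extra workspace we are granted. Then, for $e = 0, 1, \dots, \log n$ in increasing order, I would process all substrings of length $2^e$ as follows.

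For a fixed $e$, allocate an array $A[0,\dots, n-2^e]$ whose entry $A[i]$ holds the pair $\langle \rk(T[i,\dots,i+2^e-1]),\, i\rangle$, packed into $\bigO(1)$ words (the fingerprint is $\tau\in\Theta(w)$ bits, the index is $\log n \le w$ bits). Sort $A$ in place by fingerprint using an in-place integer-sorting routine; since the keys are $\bigO(w)$-bit integers, in-place radix/merge sorting runs in $\bigO(n\log n)$ time and $\bigO(1)$ extra words (or comparison-based in-place sorting already gives $\bigO(n\log n)$). After sorting, all indices sharing a fingerprint are contiguous, so I scan $A$ once and, for every pair of consecutive entries $A[t]=\langle f, i_t\rangle$, $A[t+1]=\langle f, i_{t+1}\rangle$ with equal fingerprint $f$, I verify that $T[i_t,\dots,i_t+2^e-1] = T[i_{t+1},\dots,i_{t+1}+2^e-1]$. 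Exactly as in Lemma~\ref{th:derand1}, this equality test is done in constant time: for $e=0$ by direct text access, and for $e>0$ by checking that the two length-$2^{e-1}$ halves have matching fingerprints, which is sound because $\rk$ has already been certified collision-free on length-$2^{e-1}$ substrings. If any test fails, $\rk$ is not collision-free at level $e$ and we report this; otherwise we proceed to $e+1$. The fingerprint oracle (and its power table) is kept throughout; the array $A$ is reused across the $\log n$ rounds, so the total extra space stays $n$ words.

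For the running time: each of the $\log n + 1$ levels costs $\bigO(n\log n)$ for the in-place sort plus $\bigO(n)$ for computing the fingerprints and the linear verification scan, for $\bigO(n\log n)$ per level and $\bigO(n\log^2 n)$ overall, in expectation (the only randomized cost is building the Section~\ref{sec:in-place} structure, which is $\bigO(n)$ expected). The space bound is $n$ words on top of $T$, matching the claim.

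The main obstacle I anticipate is the in-place sorting step: I must invoke an integer-sorting algorithm that simultaneously achieves $\bigO(n\log n)$ time \emph{and} $\bigO(1)$ words of auxiliary space on keys that are $\bigO(w)$ bits wide and stored packed together with their indices in $\bigO(1)$ words per element. This is exactly where the paper's cited in-place radix-sort result~\cite{radixsort} (or, more simply, an in-place $\bigO(n\log n)$-time comparison sort) is needed, and some care is required to argue that the packed key/value pairs can be moved and compared in $\bigO(1)$ time under the word-RAM model adopted in the preliminaries. The rest of the argument is a routine adaptation of Lemma~\ref{th:derand1}, since the inductive soundness of the constant-time equality test via half-fingerprints carries over verbatim.
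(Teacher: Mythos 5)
Your overall architecture (per level $e$: sort positions by length-$2^e$ fingerprint, then verify adjacent equal-fingerprint pairs via the already-certified length-$2^{e-1}$ fingerprints) is exactly the paper's, and your time analysis is correct. The gap is in the space accounting, and it is not a cosmetic one: the lemma claims $n$ words on top of $T$, not $\bigO(n)$, and this precise bound is load-bearing downstream --- in Theorem~\ref{th_LCP} and Theorem~\ref{th:det structure 1} the de-randomization must fit inside the $n$ words that will later hold the suffix/LCP array, so a constant-factor blowup breaks the in-place claims. Your proposal blows the budget twice. First, you keep the $\bigO(n)$-word table of powers $2^{\lceil\log n\rceil\cdot i}\bmod q$ from Lemma~\ref{th:derand1} \emph{and} allocate the array $A$; together these already exceed $n$ words. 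Second, your $A$ stores the pair $\langle \rk(\cdot), i\rangle$ per position: a fingerprint occupies $\tau\in\Theta(w)$ bits with $\tau$ around $10w$, so each entry costs several machine words and $A$ alone is $\Theta(n)$ words with a constant well above $1$.

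The missing idea is how to get constant-time fingerprint evaluation \emph{without} either of these expenditures. The paper's observation is that within a fixed level $e$ every fingerprint queried has the same length $2^e$ (and the verification step only additionally needs length $2^{e-1}$), so the only modular powers ever required are $2^{b\cdot 2^e}\bmod q$ and $2^{b\cdot 2^{e-1}}\bmod q$; these are kept in two memory words and updated by a single squaring when $e$ is incremented, making the full $n$-word power table unnecessary. With constant-time on-demand fingerprint computation in hand, $A$ need only store the bare text positions ($A[i]=i$, exactly $n$ words), and the in-place comparison sort evaluates $\rk(T[A[i],\dots,A[i]+2^e-1])$ fresh at each comparison instead of reading a stored key. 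If you restructure your proof around these two points --- positions-only array, two cached powers instead of a table --- the rest of your argument (including the inductive soundness of the half-fingerprint equality test) goes through verbatim and matches the paper.
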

\begin{proof}
	First, we build in-place and $\bigO(n)$ time the in-place data structure supporting the computation of $\rk(T[i,\dots,j])$ in $\bigO(\log n)$ time described in Section \ref{sec:in-place}.
	
	For $e=0,\dots,\lfloor\log n\rfloor$ we repeat the following procedure. We initialize an array (text positions) $A[0,\dots,n-2^e]$ with $A[i]=i$. We use any $\bigO(n\log n)$ in-place comparison-sorting algorithm to sort $A$ according to length-$2^e$ fingerprints, i.e. using the ordering $\prec$ defined by $A[i]\prec A[j]$ iff $\rk(T[A[i], \dots,A[i]+2^e-1]) < \rk(T[A[j], \dots,A[j]+2^e-1])$. At this point, we scan $A$ and, for every pair of adjacent $A$'s elements $A[i], A[i+1]$, if $\rk(T[A[i], \dots,A[i]+2^e-1]) = \rk(T[A[i+1], \dots,A[i+1]+2^e-1])$, then we check deterministically that the two substrings $T[A[i],\dots, A[i]+2^e-1]$ and $T[A[i+1],\dots, A[i+1]+2^e-1]$ are indeed equal with the same strategy used in the proof of Theorem \ref{th:derand1} (i.e. we compare the fingerprints of their two halves of length $2^{e-1}$, or we just access the text if $e=0$). Finally, we free the memory allocated for $A$.
	
	Analysis. For every $e\leq \log n$ we sort $A$ ($\bigO(n\log n)$ comparisons). Note that fingerprints have all the same length $2^e$, so we only need to pre-compute value $2^{b\cdot 2^e}\mod q$, with $b=\lceil\log\sigma\rceil$, in order to support fingerprint computation in constant time. Each comparison in the sorting algorithm requires the computation of two fingerprints and takes therefore constant time. We moreover need value $2^{b\cdot2^{e-1}}\mod q$ to perform the deterministic collision checks. Since $2^{b\cdot2^e}\mod q$ can be computed in constant time from $2^{b\cdot2^{e-1}}\mod q$, we need to reserve only two memory words for this sampling of powers of 2 modulo $q$ (updating these two values every time $e$ is incremented).	
	\qed
\end{proof}

If we limit the word size to $w \in \Theta(\log n)$, then we can use in-place radix sorting~\cite{radixsort} to improve upon the above result:

\begin{lemma}\label{th:derand3}
	If $w\in\Theta(\log n)$, then in $\bigO(n\log n)$ expected time and $n$ words of space (on top of $T$) we can check whether $\rk$ is collision-free over all pairs of substrings of $T$ having the same length $k=2^e$, for all $0\leq e\leq \log n$.
\end{lemma}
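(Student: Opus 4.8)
The plan is to reuse the overall structure of the proof of Lemma~\ref{th:derand2}, replacing the $\bigO(n\log n)$ in-place comparison sort in each round by an in-place radix sort, and arguing that under the restriction $w\in\Theta(\log n)$ each round costs only $\bigO(n)$ time, so that the total over the $\bigO(\log n)$ values of $e$ becomes $\bigO(n\log n)$. First I would build, in-place and $\bigO(n)$ time, the structure of Section~\ref{sec:in-place} that supports computing $\rk(T[i,\dots,j])$ in $\bigO(\log n)$ time (in fact in constant time once the single power $2^{b\cdot 2^e}\bmod q$ is pre-stored, as in the previous lemma). Then for $e=0,\dots,\lfloor\log n\rfloor$ I would allocate the array $A[0,\dots,n-2^e]$ with $A[i]=i$ and sort it by the key $\rk(T[A[i],\dots,A[i]+2^e-1])$, now using the in-place radix sorting algorithm of~\cite{radixsort}. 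After sorting I would do exactly the same linear scan over adjacent pairs as in Lemma~\ref{th:derand2}, confirming each fingerprint tie by comparing the two length-$2^{e-1}$ half-fingerprints (or by direct text access when $e=0$), which is sound because the property has already been verified for length $2^{e-1}$; finally free $A$.

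The key point to argue is the per-round time bound. The keys are Karp-Rabin fingerprints in $[0,q-1]$ with $\lceil\log q\rceil=\tau\in\Theta(w)$, and since we are now assuming $w\in\Theta(\log n)$ we have $\tau\in\Theta(\log n)$, i.e. each key fits in $\bigO(1)$ machine words and has $\bigO(\log n/\log n)=\bigO(1)$ "digits" of $\Theta(\log n)$ bits, so in-place radix sort of $n$ such keys runs in $\bigO(n)$ time and $\bigO(1)$ extra words. There is one subtlety: the sorting key for $A[i]$ is not stored explicitly but must be computed on the fly as $\rk(T[A[i],\dots,A[i]+2^e-1])$; this is fine because the in-place radix sort of~\cite{radixsort} accesses elements only through $\bigO(1)$-time primitives (read an element, read a digit of an element, swap two elements), and each such access costs only the $\bigO(1)$ time needed for one fingerprint evaluation given the pre-stored value $2^{b\cdot 2^e}\bmod q$ (which, as in the previous proof, is maintained in a single memory word and squared when $e$ increments). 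Summing $\bigO(n)$ over the $\lfloor\log n\rfloor+1$ values of $e$, plus the $\bigO(n)$ construction and the $\bigO(n)$ scan-and-verify per round, gives the claimed $\bigO(n\log n)$ expected time; the only randomness is inside the in-place construction of the $\rk$ structure (Lemma~\ref{lemma: PDS}), hence "expected".

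The main obstacle I anticipate is precisely verifying that the in-place radix sort of~\cite{radixsort} tolerates keys that are supplied by an oracle rather than stored in the array, and that doing so does not blow up the working space beyond $\bigO(1)$ words (on top of $A$, which we are permitted $n$ words for). One has to check that the algorithm never needs to hold more than a constant number of decoded keys simultaneously and that its digit-extraction step can be routed through our $\bigO(1)$-time $\rk$ primitive; if one wants to be conservative one can instead first materialize the $n$ fingerprints into a second array of $n$ words, but that would double the space, so it is cleaner to argue the oracle-access version directly. Everything else — the de-randomization logic, the induction on $e$, the half-fingerprint tie-break — is identical to Lemma~\ref{th:derand2} and needs no new idea.
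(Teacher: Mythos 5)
Your high-level plan --- swap the comparison sort of Lemma~\ref{th:derand2} for the in-place radix sort of~\cite{radixsort} so that each of the $\bigO(\log n)$ rounds costs $\bigO(n)$ --- is the right motivation, but the step you yourself flag as the ``main obstacle'' is a genuine gap, and the paper does not resolve it the way you hope. The in-place radix sort of~\cite{radixsort} sorts the integers that are physically stored in the array; it is not an algorithm with an abstract key oracle, and its in-place-ness relies on manipulating (and compressing) the stored keys themselves to create workspace. Nothing in that result licenses feeding it keys of $\tau=10w=\Theta(\log n)$ bits that live outside the array and are recomputed on demand, so your oracle-access version is an unproved and far from obvious claim. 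Your conservative fallback also fails quantitatively: a fingerprint occupies $\tau$ bits, i.e.\ roughly $10\log n$ bits, so materializing all $n$ of them costs about $10\,n\log n$ bits --- an order of magnitude more than the $n$ words you are allowed, not merely ``double'' the space.

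The paper closes this gap with a concrete packing-and-recursion scheme. Letting $c'$ be such that $\tau\le c'\log n$, it forms the explicit integers $x_i=\rk(T[i,\dots,i+2^e-1])\,i$ of $(c'+1)\log n$ bits each, but only for the first $n/(c'+1)$ positions, so that they exactly fill the $n\log n$-bit array $A$; these are genuine stored fixed-length integers, so~\cite{radixsort} applies directly and sorts them in $\bigO(n)$ time. It then compacts each $x_i$ down to the position $i$ alone, freeing a constant fraction of $A$, and recurses on the remaining $n\cdot c'/(c'+1)$ positions inside the freed space; the recurrence $T(n)=\bigO(n)+T(n/d)$ with $d=(c'+1)/c'>1$ sums to $\bigO(n)$. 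This leaves $\bigO(\log n)$ fingerprint-sorted runs of geometrically decreasing length, which are combined by repeated in-place merging~\cite{SS1987}, each comparison recomputing two fingerprints in constant time from the single pre-stored power $2^{b\cdot 2^e}\bmod q$; the merge cost is again a geometric series summing to $\bigO(n)$. The rest of your argument (the induction on $e$, the half-fingerprint verification of ties, summing over the $\bigO(\log n)$ values of $e$) matches the paper, but without the packing/recursion/merge device the per-round $\bigO(n)$ bound is not established.
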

\begin{proof}
	We need to show that we can sort in-place (i.e. $n\log n$ bits of space) and $\bigO(n)$ time text positions $i=0,\dots, n-2^e$ using as comparison values $\rk(T[i, \dots, i+2^e-1])$. Then, we plug this sorting procedure in the proof of Theorem \ref{th:derand2} to obtain the claimed bounds of the theorem.
	
	A Karp-Rabin fingerprint takes $\tau = \bigO(w) = \bigO(\log n)$ bits of space. Let $c'$ be a constant such that $\tau \leq c'\log n$.
	Let $x_i = \rk(T[i,\dots, i+2^e-1])i$ be the concatenation of the fingerprint of $T[i,\dots, i+2^e-1]$ and of position $i$ written in binary. $x_i$ takes $(c'+1)\log n$ bits of space (if less, left-pad with zeros). Note that Karp-Rabin Fingerprints can be computed in constant time using our Monte Carlo structure as the power $2^e\mod q$ is fixed. We store $x_0, \dots x_{n/(c'+1)-1}$ in an array $A$ taking $n\log n$ bits of space. We sort $x_0, \dots x_{n/(c'+1)-1}$ in-place and  $\bigO(n)$ time using in-place radix sort~\cite{radixsort}. Then, we compact $A$ by replacing each $x_i$ with the integer $i$. As a result,  the first $n/(c'+1)$ entries of $A$ now contain text positions $0,\dots, n/(c'+1)-1$ sorted by their fingerprint. We apply recursively the above procedure  to text positions $n/(c'+1),\dots, n-1$ using the free space left in $A$ (i.e. $n\log n - \frac{n}{c'+1}\log n$ bits) to perform sorting. 
	Note that, at each recursion step, the numbers we are sorting are always of fixed length ($(c'+1)\log n$ bits).
	We recurse on $n - \frac{n}{c'+1} = n\frac{c'}{c'+1}$ text positions. Let $d = \frac{c'+1}{c'} > 1$. This gives us the recurrence $T(n) = \bigO(n) + T(n / d)$ for our overall procedure (with base case $T(1) = \bigO(1)$), which results in overall $T(n) = \bigO(n)$ time (since $d>1$). 
	
	After terminating the above procedure, $A$ contains $\bigO(\log n)$ sub-arrays of text positions sorted by their fingerprints. Starting from the two rightmost such sub-arrays, we repeatedly apply in-place merge sort~\cite{SS1987} until the whole $A$ is sorted. Note that a single comparison of two text positions $i$ and $j$ requires computing $\rk(T[i,\dots, i+2^e-1])$ and $\rk(T[j,\dots, j+2^e-1])$ (constant time with our structure and using the pre-computed value $2^e \mod q$). Boundaries of the sub-arrays can be computed on-the-fly (i.e. $0, n - n/d, n - n/d^2 ,...$). Analogously to the above analysis, at the $j$-th step, $j\geq 0$, we merge in linear time two sub-arrays of total size $\bigO(d^j)$. The overall time spent inside this procedure is therefore $\bigO(\sum_{i=1}^{\log_dn}d^j) = \bigO(n)$. \qed 
\end{proof}

Note that we can always reduce the word size to $w \in \Theta(\log n)$. This, however, comes at the cost of limiting the alphabet size to $\sigma \leq n^{\bigO(1)}$ (as we assume alphabet characters fit in a constant number of memory words).
We can now use these results to build with a randomized algorithm a deterministic LCE data structure (i.e. that always returns the correct results). We randomly pick pairs $q,\seed$ 
and keep re-building our LCE structure until:
(1) its total space usage is of $n\lceil\log\sigma\rceil + \bigO(w)$ bits, and
(2) $\rk$ is collision-free over all pairs of substrings of $T$ having the same length $k=2^e$, for all $1\leq e\leq \log n$.
Checking property (1) can be done  during construction. As described in Section \ref{sec:in-place}, by reversing the construction whenever $S$ becomes non-empty, the working space never exceeds $n\lceil\log\sigma\rceil + \bigO(w)$ bits. After successful construction, property (2) can be checked with the space/time tradeoffs of Lemmas \ref{th:derand1}-\ref{th:derand3}. We are left to show what is the expected number $R$ of pairs $q,\seed$ we have to pick before both properties are satisfied. Using the results stated in Lemmas \ref{lemma: PDS} and \ref{lemma:wong LCE}, we prove the following:

\begin{lemma}\label{lemma:construction rounds}
	If we choose $\tau = 10w$, then we need to repeat the construction of our LCE structure $\bigO(1)$ expected times until (1) its total space usage is of $n\lceil\log\sigma\rceil + \bigO(w)$ bits, and
	(2) $\rk$ is collision-free over all pairs of substrings of $T$ having the same length $k=2^e$, for all $1\leq e\leq \log n$.
\end{lemma}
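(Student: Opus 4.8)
The plan is to bound the failure probability of a single construction round by a constant strictly less than $1$ (in fact, by something like $3/5$), so that the number $R$ of rounds is a geometric random variable with constant mean. There are two ways a round can fail: either property (1) fails, i.e. the array $S$ ends up non-empty, or property (1) holds but property (2) fails, i.e. $\rk$ collides on some pair of equal-length substrings whose common length is a power of two. I would treat these as two bad events and union-bound over them.

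For property (1): Section~\ref{sec:det space} already shows that, with the choice $\tau = cw \geq c\log n$, the bit $\bar 1_i$ is Bernoulli with success probability $(q - 2^{\tau-1})/q \leq 1/n$, hence $\mathcal P(|S| > 0) \leq E[|S|] \leq 1/\tau = 1/(10w) \leq 1/(10\log n)$, which for $n \in \omega(1)$ (so $\log n \geq 2$) is at most $1/20$. With $\tau = 10w$ this constraint $\tau = cw$ holds with $c = 10$. For property (2): I would invoke the collision analysis from the proof of Lemma~\ref{lemma:wong LCE}. Restricting attention to substring pairs of common length exactly a power of two (there are at most $\log n + 1$ choices of $e$, and at most $n^2$ pairs for each, each difference having at most $nb$ bits) only shrinks the relevant product $z$, so the number of ``bad'' primes is still at most $n^4 b$ and the collision probability is at most $n^4 b / z_p \leq 2^{4\log n + \log b + \log\tau + 1 + 2w - \tau}$. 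The point is that the exponent $\tau = 10w = (9+1)w$ uses the constant $c=1$ in Constraint~(\ref{constraint3}), so by Lemma~\ref{lemma:wong LCE} this probability is at most $n^{-1} \leq 1/2$ — but I actually want a small constant here, so I would note that for $n \in \omega(1)$, $n^{-1} \to 0$, and in any case $n^{-1} \leq 1/2$ suffices; combined with the $1/20$ above, a union bound gives failure probability at most $1/2 + 1/20 < 3/5 < 1$. (If one wants a cleaner margin one can instead observe $4\log n + \log b + \log\tau + 1 + 2w \leq 10w - \log n$ for all sufficiently large $w$, using $w \geq \log n$, $w \geq \log b$, $w \geq \log\tau$, giving collision probability $\leq n^{-1}$ which is $o(1)$.)

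Having bounded the single-round failure probability by a constant $p_{\mathrm fail} < 1$, the number of rounds $R$ until a round succeeds at \emph{both} properties is geometric with success probability $1 - p_{\mathrm fail} \geq 2/5$, so $E[R] = 1/(1-p_{\mathrm fail}) \leq 5/2 = \bigO(1)$. It remains only to remark that each round can indeed detect both failure conditions within the claimed resources: property (1) is detected on-the-fly during the in-place construction of Section~\ref{sec:in-place} (reversing the construction as soon as some $P'[i]$ has its top bit set), and property (2) is checked afterwards via Lemmas~\ref{th:derand1}--\ref{th:derand3}; since we only re-run these checks an expected constant number of times, the total expected cost is dominated by a single application of whichever of those lemmas is used.

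The main obstacle is simply making the two-sided union bound go through with the specific constant $\tau = 10w$: property~(1) wants $\tau \geq c\log n$ with a modest constant (Constraint~(\ref{constraint1})), property~(2) wants $\tau = (9+c)w$ with $c=1$ (Constraint~(\ref{constraint3})), and one must also keep Constraint~(\ref{constraint2}), $\tau \geq (24/5)w + 1$, satisfied — all three hold for $\tau = 10w$, so the real work is just bookkeeping of these inequalities and verifying that the restriction of the collision analysis to power-of-two lengths does not cost anything. Everything else is a routine geometric-random-variable argument.
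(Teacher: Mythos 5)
Your proposal is correct and follows essentially the same route as the paper's own proof: a union bound over the two failure events, with $\mathcal P(|S|>0)$ bounded via $E[|S|]\leq 1/\tau$ from Section~\ref{sec:det space} and $\mathcal P(C>0)\leq n^{-1}$ from Lemma~\ref{lemma:wong LCE} with $c=1$ in Constraint~(\ref{constraint3}), followed by the standard geometric-random-variable argument. The only differences are cosmetic (you obtain a slightly sharper single-round success probability, $2/5$ versus the paper's $1/4$), so nothing further is needed.
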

\begin{proof}
	Recall (proof of Lemma \ref{lemma:wong LCE}) that $C$ is the random variable denoting the number of pairs $\langle X,Y\rangle$ of $T$ substrings with $|X|=|Y|$ that generate a collision, i.e. $X\neq Y$ and $\rk(X)=\rk(Y)$, and that $S$ is the set containing all positions $i$ such that the most significant bit of $P'[i]$ is equal to 1. We are interested in computing a lower bound for the success probability
	\begin{equation}\label{eq:success prob}
	\mathcal P(C = 0 \wedge |S|= 0) = 1- \mathcal P(C>0 \vee |S|> 0)
	\end{equation}
	
	From the inequality $P(C> 0 \vee |S|> 0) \leq \mathcal P(C>0) + \mathcal P(|S|> 0)$, we obtain that the quantity in Equation \ref{eq:success prob} is greater than or equal to
	\begin{equation}\label{eq:success prob2}
	1 - \mathcal P(C>0) - \mathcal P(|S|> 0)
	\end{equation}

	We choose $\tau = 10w$. This satisfies Constraints (\ref{constraint2}),  (\ref{constraint1}), and (\ref{constraint3}) and implies that---see Section \ref{sec:det space}---$\mathcal P(|S|> 0) = 1 - \mathcal P(|S|=0) \leq 1 - 0.5 = 0.5$. It follows that quantity in Equation \ref{eq:success prob2} is greater than or equal to 
	$
	0.5 - \mathcal P(C>0)
	$. 
	Finally---see proof of Lemma \ref{lemma:wong LCE}---the choice $\tau = 10w$ implies $\mathcal P(C>0)\leq n^{-1}$. This, plugged into the above inequalities, gives us
	$
	\mathcal P(C = 0 \wedge |S|=0) \geq 0.5 - n^{-1}
	$. Note that $n^{-1} \leq 0.25$ holds for $n\geq 4$, which is true by our assumption $n\in\omega(1)$. We obtain: $\mathcal P(C = 0 \wedge |S|=0) \geq 0.25$.
	The number $R$ of rounds of our construction algorithm is a geometric random variable with success probability $p=\mathcal P(C = 0 \wedge |S|=0) \geq 0.25$, and has therefore expected value $1/p \leq 4$. \qed
\end{proof}

We use the technique exploited in Theorem \ref{th:MC structure 2}---i.e. we only compare substrings whose length is a power of 2 during exponential and binary search---in order to compute LCE queries with our structure, so that we only need $\rk$ to be collision-free between text substrings whose lengths are powers of two. If we do not pre-compute values $2^{\lceil\log\sigma\rceil\cdot 2^i}\mod q$, $0\leq i< \log n$, at each binary/exponential search step we have to compute one of them in $\bigO(\log \ell)$ time using the fast exponentiation algorithm. Using the collision-checking procedures described in Lemmas \ref{th:derand1}-\ref{th:derand3} we obtain:

\begin{theorem}\label{th:det structure 1}
	Within the following time-space bounds (space is on top of $T$):
	\begin{itemize}
		\item $\bigO(n\log n)$ expected time and $\bigO(n)$ words of space, or
		\item $\bigO(n\log^2 n)$ expected time and $n$ words of space, or
		\item $\bigO(n\log n)$ expected time and $n$ words of space --- provided that $w\in\Theta(\log n)$
	\end{itemize}
	we can replace the text with a deterministic data structure of $n\lceil \log\sigma\rceil + \bigO(w)$ bits supporting extraction of any length-$m$ text substring and LCE queries in $\bigO\left(m\log\sigma/w \right)$ and $\bigO(\log^2\ell)$ worst-case time, respectively. LCE queries are supported in $\bigO(\log\ell)$ time using $\bigO(\log n)$ additional words of space.  
\end{theorem}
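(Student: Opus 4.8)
The plan is to assemble Theorem~\ref{th:det structure 1} by combining three ingredients already established: (i) the in-place construction of the Monte Carlo LCE structure from Section~\ref{sec:in-place} (which by Lemma~\ref{lemma: PDS} runs in $\bigO(n)$ expected time and produces arrays $P$, $D$, $S$ in $n\lceil\log\sigma\rceil+\bigO(w)$ bits whenever $|S|=0$); (ii) the collision-checking subroutines of Lemmas~\ref{th:derand1}--\ref{th:derand3}, which verify that $\rk$ is collision-free over all equal-length substring pairs of length $2^e$, $1\le e\le\log n$, with the three stated time/space trade-offs; and (iii) Lemma~\ref{lemma:construction rounds}, which tells us that picking $\tau=10w$ makes the expected number $R$ of random pairs $q,\seed$ needed before \emph{both} property~(1) (space is $n\lceil\log\sigma\rceil+\bigO(w)$ bits, equivalently $|S|=0$) and property~(2) (collision-freeness over power-of-two lengths) hold simultaneously equal to $\bigO(1)$.

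The algorithm is therefore: repeatedly draw a random pair $q,\seed$ with $\tau=10w$; attempt the in-place construction of Section~\ref{sec:in-place}, aborting and reverting the text as soon as the most significant bit of some $P'[i]$ turns out to be $1$ (so property~(1) is enforced during construction and the working space never exceeds $n\lceil\log\sigma\rceil+\bigO(w)$ bits); once construction succeeds, run the appropriate collision-checking procedure from Lemmas~\ref{th:derand1}--\ref{th:derand3} to test property~(2); if the check fails, revert the text and start over. By Lemma~\ref{lemma:construction rounds} this loop terminates after $\bigO(1)$ iterations in expectation, and each iteration costs $\bigO(n)$ for the (attempted) construction plus the cost of one collision check. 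Hence the total expected time is dominated by the collision-checking step, giving $\bigO(n\log n)$ with $\bigO(n)$ extra words via Lemma~\ref{th:derand1}, or $\bigO(n\log^2 n)$ with $n$ extra words via Lemma~\ref{th:derand2}, or $\bigO(n\log n)$ with $n$ extra words when $w\in\Theta(\log n)$ via Lemma~\ref{th:derand3}. The $\bigO(n)$ extra words used transiently during checking are freed afterwards, so the final structure occupies only $n\lceil\log\sigma\rceil+\bigO(w)$ bits.

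It remains to argue that, once property~(2) holds, the resulting structure answers LCE queries \emph{correctly} (not merely w.h.p.) within the claimed query bounds. Here one reuses the query algorithm of Theorem~\ref{th:MC structure 2}: exponential search followed by binary search, where at \emph{every} step we compare fingerprints of two substrings whose lengths are equal powers of two. Because property~(2) guarantees $\rk$ is injective on equal-length power-of-two substrings, each such comparison reflects true string (in)equality, so the search returns the exact value $\ell=LCE(i,j)$. Each step costs $\bigO(\log\ell)$ time for a fast-exponentiation computation of the relevant power $2^{c\cdot 2^i}\bmod q$ (there are $\bigO(\log\ell)$ steps in total), yielding $\bigO(\log^2\ell)$ worst-case query time with no precomputed powers; precomputing and storing the $\bigO(\log n)$ values $2^{\lceil\log\sigma\rceil\cdot 2^i}\bmod q$ in $\bigO(\log n)$ additional words makes each step constant-time, for $\bigO(\log\ell)$ total. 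Text extraction of a length-$m$ substring reduces, exactly as in Theorem~\ref{th:MC structure 1}, to reading the appropriate $\bigO(m\log\sigma/w)$ entries of $B$, each recoverable in constant time from $P$, $D$. The main obstacle — and the only genuinely delicate point — is confirming that all three collision-checking lemmas really do verify property~(2) over precisely the set of lengths the power-of-two query strategy touches (lengths $2^e$ for $1\le e\le\log n$), so that correctness of the de-randomized queries follows with certainty; once that alignment is checked, the rest is bookkeeping of the time/space budgets already tabulated in the cited lemmas.
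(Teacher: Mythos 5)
Your proposal matches the paper's argument essentially step for step: repeatedly draw $q,\seed$ with $\tau=10w$, enforce property~(1) during the in-place construction by reverting when $S$ becomes non-empty, verify property~(2) with Lemmas~\ref{th:derand1}--\ref{th:derand3}, invoke Lemma~\ref{lemma:construction rounds} for the $\bigO(1)$ expected number of rounds, and answer queries via the power-of-two comparison strategy of Theorem~\ref{th:MC structure 2} (with fast exponentiation giving $\bigO(\log^2\ell)$, or $\bigO(\log n)$ precomputed powers giving $\bigO(\log\ell)$). The final concern you raise is immediately resolved since the lemmas check all lengths $2^e$ for $0\le e\le\log n$, a superset of what the query algorithm touches, so the proof is complete and coincides with the paper's.
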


Note that our data structure replaces the text and uses $s(n) = \Theta(\log n)$ or $s(n) = \Theta(\log^2 n)$ bits of space on top of the text (taking $w$ to be $\log n$) to support LCE queries in time $t(n) = \bigO(\log^2 n)$ and $t(n) = \bigO(\log n)$, respectively. Very recently, Kosolobov showed~\cite{kosolobov2017tight} that the relation $s(n)t(n) \in \Omega(n\log n)$ must hold when the text is read-only and $s(n) \in \Omega(n)$. Our results satisfy $s(n)t(n) \in \bigO(\log^3 n)$, but do not break the above lower bound in that (i) our space violates the requirement $s(n) \in \Omega(n)$, and (ii) our model allows the text to be overwritten.

\section{In-place LCP array, sparse suffix sorting, and suffix selection}\label{sec:in-place results}

In this section we attack the problems of computing in-place the LCP, the SSA, and the SLCP arrays. We assume that each entry of these arrays is stored using $\log n$ bits. We moreover describe the first in-place solution for the suffix selection problem: to return the $i$-th lexicographically smallest text suffix.

With \emph{slow LCE queries} on our LCE data structure we denote queries running in $\bigO(\log^2 n)$ time and $\bigO(w)$ bits of space on top of the text's space. With \emph{fast LCE queries} we denote those running in $\bigO(\log n)$ time and requiring $\tau\cdot\log n = 10w\log n$ bits of space on top of the text's space (this space is needed to store values  $2^{\lceil\log\sigma\rceil\cdot 2^i}\mod q$, see Theorem \ref{th:MC structure 2}). Note that the lexicographic order of any two text suffixes can be easily computed by comparing the two characters following their longest common prefix (i.e. one LCE query and one text access). 

In some of our results below, we obtain the space to support fast LCE queries by compressing integer sequences as done in~\cite{franceschini2007place,radixsort}: given a sequence $S[1,\dots,k]$ of $\log n$-bits integers, with $k\geq 10w\log n$, we first sort it in-place and in $\bigO(k\log k)$ time using any in-place comparison-sort algorithm. Then, we store in one word the index of the first integer in the sorted sequence starting with bit `1' and compact $S$ in $k\log n -k$ adjacent bits by removing the most significant bit from each integer. This saves $k \geq 10w\log n$ bits of space. We store values $2^{\lceil\log\sigma\rceil\cdot 2^i}\mod q$ in this space, so that our structure supports fast LCE queries. When fast LCE queries are no more needed, $S$ can be decompressed (i.e. pre-pending again the most significant bit to each integer). 

\subsection{In-place LCP array} 
In this paragraph we describe a sub-quadratic in-place LCP construction algorithm. The result, stated in the following theorem, follows from a careful combination of our deterministic LCE data structure, in-place suffix sorting~\cite{franceschini2007place}, in-place radix sorting~\cite{radixsort}, and compression of integer sequences:

\begin{theorem}\label{th_LCP}
	The Longest Common Prefix array ($LCP$) of a text $T\in\Sigma^n$  stored in $n\lceil\log\sigma\rceil$ bits can be computed within the following bounds:
	\begin{enumerate}
		\item $\bigO(n\log^2 n)$ expected time and $\bigO(1)$ words of space on top of the text and the $LCP$, provided that $|\Sigma| \leq 2^{\bigO(w)}$, or
		\item $\bigO(n\log n)$ expected time and $\bigO(1)$ words of space on top of the text and the $LCP$, provided that $|\Sigma|\leq n^{\bigO(1)}$.
	\end{enumerate}
\end{theorem}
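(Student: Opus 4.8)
The plan is to build the LCP array in place by first constructing the (full) suffix array in place, then converting SA to LCP using our deterministic LCE data structure, being careful that at each phase we only use $\bigO(1)$ words on top of whatever array currently occupies the $n\lceil\log\sigma\rceil$-bit text region plus the $n\log n$-bit output region. I would first replace the text with the deterministic implicit LCE structure of Theorem~\ref{th:det structure 1}; for case (2), with $|\Sigma|\le n^{\bigO(1)}$, we may take $w\in\Theta(\log n)$ and use the third tradeoff (building it in $\bigO(n\log n)$ expected time, $n$ words of extra space), and for case (1) the second tradeoff ($\bigO(n\log^2 n)$ expected time, $n$ words of extra space). The subtlety is that this construction temporarily needs $n$ extra \emph{words}, which in place must come from the LCP array region: before the LCE structure is built we may freely use the $n\log n$ bits reserved for LCP as scratch. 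After derandomization succeeds, the LCE structure occupies only $n\lceil\log\sigma\rceil+\bigO(w)$ bits, so the LCP region is free again.

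Next I would build the suffix array in the LCP region. Franceschini and Muthukrishnan~\cite{franceschini2007place} build SA in place and in optimal time, but their algorithm needs to compare suffixes, i.e.\ read the text; since the text has been overwritten by the LCE structure, I would instead use the LCE structure's optimal-time text-extraction to supply characters on demand (comparing two suffixes is one LCE query plus one character access, as noted just before this theorem), or --- cleaner --- observe that the LCE structure is reversible, so I can restore the text, run in-place suffix sorting into the LCP region, then rebuild the LCE structure. For case (2) this costs $\bigO(n)$ per transformation and $\bigO(n\log n)$ for the radix-sort-based suffix construction; for case (1) suffix sorting costs $\bigO(n)$ and rebuilding costs $\bigO(n\log^2 n)$. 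Either way we now have, in the LCP region, the array $SA[1,\dots,n]$, and the text region holds the deterministic LCE structure.

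Then I would convert $SA$ to $LCP$ in place, scanning $i=2,\dots,n$ and setting $LCP[i]=T.LCE(SA[i-1],SA[i])$ (with $LCP[1]=0$), overwriting $SA[i]$ with this value. The catch is that computing $LCP[i]$ requires $SA[i-1]$ \emph{after} it has been overwritten, so I would keep the single previous value $SA[i-1]$ in one register before overwriting it --- this is $\bigO(1)$ extra words. Each LCE query is $\bigO(\log^2 n)$ slow or $\bigO(\log n)$ fast; to get fast queries I would use the integer-compression trick described just above (the LCP/SA array being overwritten has $n\ge 10w\log n$ entries of $\log n$ bits, so sort-and-strip-a-bit frees $10w\log n$ bits to store the $2^{\lceil\log\sigma\rceil\cdot 2^i}\bmod q$ table). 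This is delicate because we are simultaneously reading $SA$, writing $LCP$, and want part of the same array compressed; the way around it is to process $SA$ in the natural left-to-right order so that the already-written $LCP$ prefix, together with the not-yet-read $SA$ suffix, can jointly be kept in a form that still yields the needed $10w\log n$ bits --- or, more simply, to accept slow LCE queries and get the $\bigO(n\log^2 n)$ bound of case (1), and in case (2) exploit $w=\Theta(\log n)$ so that even slow LCE queries cost $\bigO(\log^2 n)=\bigO(\log^2 n)$, which would only give $\bigO(n\log^2 n)$, not $\bigO(n\log n)$ --- hence for case (2) the fast-query space really must be extracted, and radix sorting~\cite{radixsort} rather than comparison sorting keeps the compression step at $\bigO(n)$ rather than $\bigO(n\log n)$. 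Finally I would decompress $SA$/$LCP$, restore the text from the reversible LCE structure, and report $LCP$ in the output region.

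The main obstacle I expect is the bookkeeping of \emph{which array lives in which region at which phase} so that every phase genuinely uses only $\bigO(1)$ words on top of the text-plus-LCP space: the suffix-array construction and the LCE-structure derandomization each want $n$ extra words, and these must be borrowed from the region not currently holding persistent data. The argument hinges on the fact that (a) before LCP is needed, its $n\log n$ bits are free scratch; (b) the final LCE structure is small ($n\lceil\log\sigma\rceil+\bigO(w)$ bits), so the text region itself has no slack but the LCE structure is reversible, letting us temporarily trade it for the raw text; and (c) for the $\bigO(n\log n)$ bound we must stay in the $w=\Theta(\log n)$, $\sigma\le n^{\bigO(1)}$ regime so that in-place radix sort is available and the fast-LCE table fits in the bits stripped from $SA$. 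Verifying that all the $\bigO(n\log n)$ and $\bigO(n\log^2 n)$ terms add up and that no phase secretly needs $\omega(1)$ persistent extra words is the crux; the algorithmic ideas themselves are just composition of Theorem~\ref{th:det structure 1}, \cite{franceschini2007place}, \cite{radixsort}, and the compression trick.
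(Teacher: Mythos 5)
Your overall architecture --- derandomize the LCE structure using the LCP region as $n$ words of scratch, store $q$ and $\seed$, restore the text, build SA in place into the LCP region with \cite{franceschini2007place}, rebuild the deterministic LCE structure, and convert SA to LCP by one LCE query per adjacent pair --- is exactly the paper's, and your space bookkeeping (including the observation that SA cannot be built before derandomization because derandomization itself needs the $n$ words) is correct. Case (1), done entirely with slow LCE queries, goes through as you describe.

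The gap is in case (2), and you half-see it yourself but do not resolve it. The compression trick frees bits by \emph{sorting the integers by value} and stripping the most significant bit, so it destroys the suffix-sorted order of whatever portion of SA it touches. Hence you cannot compress the not-yet-read suffix of SA and still read it in suffix order, nor keep the already-written LCP prefix compressed while appending one value per step without re-sorting each time; your proposed left-to-right processing does not escape this. The paper's fix is to sacrifice a small piece: compress only $SA'=SA[1,\dots,n/\log^2 n]$ (accepting that its order is lost), convert the remaining $SA''$ to $LCP[n/\log^2 n+1,\dots,n]$ with fast queries in $\bigO(n\log n)$ time, decompress $SA'$, then \emph{re-suffix-sort $SA'$ from scratch} with slow LCE queries --- affordable precisely because $(n/\log^2 n)\cdot\log(n/\log^2 n)\cdot\log^2 n=\bigO(n\log n)$ --- and finally convert $SA'$ to $LCP[1,\dots,n/\log^2 n]$ with slow queries in $\bigO(n)$ time. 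Without this ``compress a $1/\log^2 n$ fraction and re-sort it afterwards'' idea your case (2) does not reach $\bigO(n\log n)$. (A minor correction: since only $n/\log^2 n$ entries are compressed, an in-place comparison sort already costs $\bigO(n/\log n)$ for that step; in-place radix sort \cite{radixsort} is needed for the derandomization of Lemma~\ref{th:derand3}, not for the compression.)
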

\begin{proof}
	To get bound (1), we build and de-randomize our LCE structure using Lemma \ref{th:derand2}. We store in $\bigO(1)$ words the modulo $q$ and the seed $\seed$ computed during construction and restore the text. We build the suffix array SA in-place and $\bigO(n\log n)$ time using \cite{franceschini2007place}. We re-build our deterministic LCE structure using the values $q$ and $\seed$ computed above (in-place and $\bigO(n)$ time). We convert SA to LCP using slow LCE queries to compute LCE's of adjacent suffixes (in-place and $\bigO(n\log^2 n)$ time). To conclude, we restore the text inverting the construction of our structure. 
	To get bound (2), we must be more careful. First of all, we limit the word size to $w=\log n$. Then:
	\begin{itemize}
		\item We build and de-randomize our LCE structure using Lemma \ref{th:derand3}. We store in $\bigO(1)$ words the modulo $q$ and the seed $\seed$ computed during construction and restore the text (needed for the next step).
		\item We build the suffix array SA in-place and $\bigO(n\log n)$ time using \cite{franceschini2007place}. Note that we cannot perform this step before computing  $q$ and $\seed$, as the de-randomization procedures needs $n$ words of space. 
		\item We re-build our deterministic LCE structure using the values $q$ and $\seed$ computed in the first step. This step runs in-place and $\bigO(n)$ time.
		\item We compress the integers in $SA' = SA[1,\dots, n/\log^2 n]$ with the procedure described at the beginning of this section (in-place and $\bigO(n/\log n)$ time).  This saves $n/\log^2 n > 10\log^2n$ bits of space (this inequality holds for $n$ larger than some constant). We store values $2^{\lceil\log\sigma\rceil\cdot 2^i}\mod q$ in this space, so that our structure now supports fast LCE queries. Note that $SA'$ is no more suffix-sorted.
		\item We convert $SA''=SA[n/\log^2 n+1,\dots, n]$ to  $LCP[n/\log^2 n+1,\dots, n]$ by computing LCE values of adjacent suffixes using fast LCE queries. This step runs in-place and $\bigO(n\log n)$ time.
		\item We decompress $SA'$. Now our LCE structure supports only slow LCE queries.
		\item We suffix-sort in-place (comparison-based sorting) $SA'$ using slow LCE queries. This step takes $\bigO((n/\log^2n)\log(n/\log^2n)\log^2 n) = \bigO(n\log n)$ time.
		\item We convert $SA'$ to $LCP[1, \dots, n/\log^2n]$ by computing LCE values of adjacent suffixes using slow LCE queries. This step runs in $\bigO\left(\frac{n}{\log^2n}\cdot \log^2 n\right) = \bigO(n)$ time and in-place.
	\end{itemize}
\end{proof}

\subsection{In-place SSA and SLCP arrays}

By combining our Monte Carlo data structure with in-place comparison sorting, in-place comparison-based merging~\cite{SS1987}, and compression of integer sequences we  obtain:

\begin{theorem}\label{th:suffix-sort}
	Any set $S=\{i_1,\dots,i_b\}$ of $b$ suffixes of a text $T\in\Sigma^n$ stored in $n\lceil\log\sigma\rceil$ bits, with $|\Sigma|\leq n^{\bigO(1)}$, can be sorted correctly with high probability in  $\bigO(n+b\log^2 n)$ expected time using $\bigO(1)$ words of space on top of $T$ and $S$.
\end{theorem}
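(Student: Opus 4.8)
The plan is to build, on top of $T$, the Monte Carlo in-place LCE data structure of Theorem~\ref{th:MC structure 1}, and then suffix-sort $S$ by plain in-place comparison sorting, where comparing two suffixes amounts to one LCE query followed by one constant-time character extraction. Since $|\Sigma|\leq n^{\bigO(1)}$, the first step is to simulate a word of size $w=\Theta(\log n)$: this is a constant-factor slowdown and keeps each alphabet character inside $\bigO(1)$ words. We then run the construction of Section~\ref{sec:in-place}, overwriting $T$ in $\bigO(n)$ expected time with the structure of Theorem~\ref{th:MC structure 1}, and keep the modulus $q$ and seed $\seed$ in $\bigO(1)$ words. From now on a \emph{slow} LCE query costs $\bigO(\log^2 n)$ time and only $\bigO(w)$ bits of scratch, so comparing two suffixes costs $\bigO(\log^2 n)$ time. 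By Lemma~\ref{lemma:wong LCE}, tuned with a sufficiently large constant, with probability at least $1-n^{-c}$ \emph{no} equal-length substring pair collides, hence \emph{every} LCE query issued during the algorithm is answered correctly; thus the whole sorting is correct w.h.p. with no further union bound needed.

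If $b\leq n/\log^3 n$, we simply sort $S$ in place with any $\bigO(b\log b)$-comparison in-place comparison-sort algorithm, using slow LCE queries for comparisons. The cost is $\bigO(b\log b\cdot\log^2 n)=\bigO(b\log^3 n)=\bigO(n)$ time and $\bigO(1)$ words on top of $T$ and $S$. Restoring $T$ by inverting the construction of Section~\ref{sec:in-place} in $\bigO(n)$ time concludes this case, well within the $\bigO(n+b\log^2 n)$ budget.

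Otherwise $b>n/\log^3 n$, which for all $n$ larger than a constant (the only relevant case) forces $b$ to exceed, and in fact dominate, the threshold $k=\Theta(\log^2 n)$ chosen so that compressing $k$ entries of $S$ frees the $\tau\log n=\bigO(\log^2 n)$ bits needed to store the values $2^{\lceil\log\sigma\rceil\cdot 2^i}\bmod q$, $0\leq i\leq\lfloor\log n\rfloor$, which upgrade slow LCE queries to \emph{fast} $\bigO(\log\ell)$-time ones (Theorem~\ref{th:MC structure 2}). We then: (i) sort the prefix $S'=S[1,\dots,k]$ by value with an in-place comparison sort and compress it with the technique of Section~\ref{sec:in-place results}, freeing $\geq\tau\log n$ bits ($\bigO(k\log k)=\bigO(\log^3 n)$ time); (ii) compute and store the $\bigO(\log n)$ powers of two modulo $q$ in the freed region, enabling fast LCE queries; (iii) sort $S''=S[k+1,\dots,b]$ by suffix with an in-place $\bigO((b-k)\log(b-k))$-comparison sort, each comparison being a fast LCE query plus a character extraction, for $\bigO(b\log b\cdot\log n)=\bigO(b\log^2 n)$ time; (iv) decompress $S'$, which disables fast LCE queries; (v) sort $S'$ by suffix with slow LCE queries in $\bigO(k\log k\cdot\log^2 n)=\bigO(\log^4 n\log\log n)=\bigO(n)$ time; (vi) merge the two sorted runs $S'$ and $S''$ in place with the in-place comparison-based merging of~\cite{SS1987}, using slow LCE queries for its $\bigO(b)$ comparisons, in $\bigO(b\log^2 n)$ time. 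Finally we restore $T$ in $\bigO(n)$ time. The total is $\bigO(n+b\log^2 n)$ expected time using $\bigO(1)$ words on top of $T$ and $S$, correct w.h.p.

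The delicate point is obtaining the $\Theta(\log n)$ words of scratch required for fast LCE queries: once $T$ has been replaced by the LCE structure it carries no slack, and $\bigO(1)$ words on top of $S$ do not suffice. The resolution is exactly the case split above --- when $b$ is small, slow LCE queries are already fast enough because $b\log^3 n=\bigO(n)$; when $b$ is large, a $\Theta(\log^2 n)$-entry prefix of $S$ can be compressed to create precisely the needed space, re-sorted afterwards with slow queries (cheaply, since it is only polylogarithmically long) and merged back in with a linear number of comparisons. Checking that steps (i)--(vi) each stay within budget, and that $k=o(b)$ in the large case, is routine once the thresholds are fixed.
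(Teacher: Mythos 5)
Your proposal is correct and follows essentially the same route as the paper: build the Monte Carlo LCE structure in place, split on $b\lessgtr n/\log^3 n$, and in the large case compress a prefix of $S$ to free the $\Theta(\log^2 n)$ bits needed for fast LCE queries, sort the remainder with fast queries, re-sort the prefix with slow queries, and merge in place. The only difference is that you compress a $\Theta(\log^2 n)$-entry prefix rather than the paper's $n/\log^3 n$-entry prefix; both choices free enough bits and stay within the stated time budget.
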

\begin{proof} 
	We limit the word size to $w=\log n$, with the resulting requirement $|\Sigma|\leq n^{\bigO(1)}$.
	We first build  our Monte Carlo structure using the in-place construction algorithm of Section \ref{sec:in-place} ($\bigO(n)$ expected time).
	If $b<n/\log^3 n$, then we suffix-sort the $b$ text positions plugging slow LCE queries in any in-place comparison sorting algorithm terminating within $\bigO(b\log b)$ comparisons. This requires $\bigO(b\log b\log^2 n)\in\bigO(n)$ time. 
	If $b\geq n/\log^3 n$, then we divide  $S$  in two sub-arrays $S' = S[1,\dots,n/\log^3 n]$ and $S'' = S[n/\log^3 n+1,\dots, b]$ and: 
	\begin{itemize}
		\item We compress $S'$ in-place and $\bigO(n/\log^2n)$ time as described at the beginning of this section. This saves $n/\log^3 n > 10\log^2n$ bits of space (note that this inequality holds for $n$ larger than some constant). We store values $2^{\lceil\log\sigma\rceil\cdot 2^i}\mod q$ in this space, so that our structure now supports fast LCE queries.
		\item We suffix-sort in-place (comparison-based sorting) $S''$ using fast LCE queries. This step terminates in $\bigO(b\log b\log n) \subseteq \bigO(b\log^2n)$ time. 
		\item We decompress $S'$. Now our structure supports only slow LCE queries.
		\item We suffix sort in-place (comparison-based sorting) $S'$ using slow LCE queries. This step terminates in $\bigO((n/\log^3 n)\cdot \log(n/\log^3 n)\cdot \log^2 n)\in\bigO(n)$ time
		\item We merge $S'$ and $S''$ using slow LCE queries and in-place comparison-based merging~\cite{SS1987}. This step terminates in $\bigO(b\log^2 n)$ time.
	\end{itemize}
\end{proof}

Note that we can add a further step to the above-described procedure: after obtaining the SSA, we can overwrite it with the SLCP by replacing adjacent SSA entries with their LCE. This step runs in-place and $\bigO(b\log^2n)$ time using slow LCE queries. We obtain:

\begin{theorem}\label{th:sparse LCP}
	Given a text $T\in\Sigma^n$  stored in $n\lceil\log\sigma\rceil$ bits, with $|\Sigma|\leq n^{\bigO(1)}$,  and a subset $S=\{i_1,\dots,i_b\}$ of $b$ suffixes of $T$, we can---with high probability of success---replace $S$ with the sparse LCP array relative to $S$ in  $\bigO(n+b\log^2 n)$ expected time using $\bigO(1)$ words of space on top of $T$ and $S$.
\end{theorem}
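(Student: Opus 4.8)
The plan is to reduce Theorem~\ref{th:sparse LCP} almost entirely to Theorem~\ref{th:suffix-sort}, adding only a single in-place post-processing pass. First I would invoke Theorem~\ref{th:suffix-sort}: under the stated assumptions ($T\in\Sigma^n$ stored in $n\lceil\log\sigma\rceil$ bits, $|\Sigma|\le n^{\bigO(1)}$, word size $w=\log n$), we can replace $S$ with the sparse suffix array $SSA$ of $S$, sorted correctly with high probability, in $\bigO(n+b\log^2 n)$ expected time and $\bigO(1)$ words of working space on top of $T$ and $S$. Crucially, the proof of Theorem~\ref{th:suffix-sort} leaves our Monte Carlo LCE data structure of Section~\ref{sec:in-place} built in place over $T$ (it was needed to drive the comparison sorts), so at the end of that procedure we still have $\bigO(1)$-word access to $\rk(T[i,\dots,j])$ and hence slow LCE queries in $\bigO(\log^2 n)$ time. (If one prefers to re-establish a clean state, one stores $q$ and $\seed$ in $\bigO(1)$ words, restores $T$, and rebuilds the structure in $\bigO(n)$ time; either way the overhead is within budget.)

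The second and final step is the conversion $SSA \to SLCP$ done in place over the $b$-word array. Recall that $SLCP[k] = T.LCE(SSA[k-1], SSA[k])$ is the length of the longest common prefix of the two lexicographically adjacent sampled suffixes, with $SLCP[1]$ (or $SLCP[0]$, by whatever boundary convention the paper fixes) set to $0$. Since each such value depends only on two consecutive $SSA$ entries, I would scan the array \emph{from right to left}: maintain in one register the current entry $SSA[k]$, compute $\ell_k = T.LCE(SSA[k-1],SSA[k])$ with a slow LCE query on the already-built LCE structure, then overwrite position $k$ with $\ell_k$; because we process $k$ in decreasing order and $\ell_k$ uses $SSA[k-1]$ which has not yet been overwritten, no information is lost and no auxiliary array is needed. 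This uses $\bigO(1)$ extra words and performs $b-1$ slow LCE queries, for $\bigO(b\log^2 n)$ time. Finally, we restore $T$ by inverting the in-place construction of the LCE structure ($\bigO(n)$ time, $\bigO(1)$ words), leaving $S$ holding the $SLCP$ array and $T$ in its original form.

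Adding the two phases, the total expected running time is $\bigO(n+b\log^2 n) + \bigO(b\log^2 n) + \bigO(n) = \bigO(n+b\log^2 n)$, the working space on top of $T$ and $S$ is $\bigO(1)$ words throughout, and correctness holds with high probability --- the only randomized component is the LCE/SSA construction inherited from Theorem~\ref{th:suffix-sort}, and the $b-1$ slow LCE queries add only $n^{-\bigO(1)}$ further failure probability by Lemma~\ref{lemma:wong LCE} and a union bound over at most $b\le n$ queries. I do not expect a genuine obstacle here: the only point needing care is the overwrite order in the $SSA\to SLCP$ pass (each $SLCP$ entry must be computed before either of the two $SSA$ cells feeding it is destroyed, which the right-to-left sweep guarantees), together with the observation that the $b$ words of $S$ suffice to hold either array since both have $b$ entries of $\lceil\log n\rceil$ bits and all intermediate comparisons are resolved through the LCE structure sitting on $T$ rather than through extra scratch space.
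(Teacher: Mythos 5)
Your proposal is correct and matches the paper's argument: the paper obtains Theorem~\ref{th:sparse LCP} by appending to the procedure of Theorem~\ref{th:suffix-sort} a single in-place pass that overwrites adjacent SSA entries with their LCE values via slow LCE queries, costing $\bigO(b\log^2 n)$ extra time. Your right-to-left overwrite order and the accounting of the additional failure probability are fine details the paper leaves implicit, but the route is the same.
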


\subsection{In-place suffix selection}

In this paragraph we provide the first optimal-space sub-quadratic algorithm solving the suffix selection problem: given a text $T$ and an index $0\leq i <n$, output the text position corresponding to the $i$-th lexicographically smallest text suffix. Our idea is to use a variant of the quick-select algorithm (i.e. solving the selection problem on integers) operating with the lexicographical ordering of text suffixes. 

\begin{theorem}
	Given a text $T\in\Sigma^n$  stored in $n\lceil\log\sigma\rceil$ bits, with $|\Sigma|\leq 2^{\bigO(w)}$, and an index $0\leq i <n$, we can---with high probability of success---find the $i$-th lexicographically smallest text suffix in $\bigO(n\log^3n)$ expected time using $\bigO(1)$ words of working space on top of $T$.
\end{theorem}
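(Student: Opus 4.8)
\subsection*{Proof sketch}

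The plan is to run a space-restricted randomized \emph{quick-select} over the $n$ text suffixes, using the Monte Carlo in-place LCE structure of Theorem~\ref{th:MC structure 1} (built, and at the end reversed, as in Section~\ref{sec:in-place}) to compare suffixes. Since $|\Sigma|\le 2^{\bigO(w)}$, one slow LCE query plus one character access lexicographically compares any two suffixes in $\bigO(\log^2 n)$ time and $\bigO(w)$ bits of workspace, and every comparison used by the algorithm is of this kind. The obstacle is that, once the LCE structure has overwritten $T$, there is no room left to store a working array of candidate text positions, so the usual quick-select partition step cannot be carried out explicitly; the whole point is to emulate it in $\bigO(1)$ words.

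To do so, the algorithm never materializes the candidate set: it maintains only two suffix ``bounds'' given by text positions $\ell,u$ (with sentinels playing the roles of $-\infty$ and $+\infty$), a rank offset $c$, and the target index $i$. The invariant is that the suffix of global rank $i$ has local rank $i-c$ inside the \emph{window} $W=\{\,j : \text{suffix }\ell\prec\text{suffix }j\prec\text{suffix }u\,\}$, with $c=|\{\,j:\text{suffix }j\preceq\text{suffix }\ell\,\}|$. One round: first draw a random pivot $p\in W$ by rejection sampling --- pick $p$ uniform in $\{0,\dots,n-1\}$, test $p\in W$ with two suffix comparisons, repeat until success; if $|W|=k$ this costs $\bigO((n/k)\log^2 n)$ expected time. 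Then scan all $j=0,\dots,n-1$, and for every $j\in W$ decide whether $\text{suffix }j\prec\text{suffix }p$, thereby computing both $|W|$ and the local rank $r$ of $p$ in $\bigO(n\log^2 n)$ time. If $r=i-c$ we output $p$; if $r>i-c$ we set $u\leftarrow p$; otherwise we set $\ell\leftarrow p$ and increase $c$ by $r+1$. All text suffixes are pairwise distinct (distinct lengths), so there are no ties, the target suffix always stays in $W$, and the invariant is preserved; the number of rounds equals the recursion depth.

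For the running time, a uniformly random pivot falls in the middle half of $W$ with probability at least $\tfrac12$, which shrinks $|W|$ to at most $\tfrac34$ of its value; hence after $\bigO(\log n)$ rounds the window is exhausted, and by a Chernoff bound on the number of such ``good'' rounds the recursion depth is $\bigO(\log n)$ w.h.p. Each round costs $\bigO(n\log^2 n)$ for the scan plus $\bigO((n/|W|)\log^2 n)\subseteq\bigO(n\log^2 n)$ for pivot selection, so the total is $\bigO(n\log^3 n)$ w.h.p.; the standard quick-select argument (geometric shrinkage of $|W|$, so $\sum_{\text{rounds}} n/|W|$ telescopes to $\bigO(n)$) gives the same bound in expectation, even accounting for the fact that a single rejection-sampling phase is only bounded in expectation. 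Throughout, only $\bigO(1)$ words are used on top of $T$, and at the end we reverse the construction of Section~\ref{sec:in-place} to restore $T$.

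Correctness: the only source of error is the Monte Carlo LCE structure, whose individual queries are correct with probability at least $1-n^{-c'}$ for an arbitrarily large constant $c'$. W.h.p.\ the algorithm issues $\bigO(n\log n)$ LCE queries, so by a union bound all of them are correct with probability at least $1-n^{-(c'-2)}$; taking $c'$ large enough yields the claimed high-probability guarantee. The main difficulty is exactly the $\bigO(1)$-word constraint, resolved by replacing the explicit partition with the pair $\ell,u$ plus the offset $c$ and by drawing pivots via rejection sampling; with this in place the time and correctness analyses are routine adaptations of the classical ones. \qed
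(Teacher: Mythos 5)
Your proposal is correct and follows essentially the same route as the paper: a quick-select over suffixes that keeps only a lexicographic window delimited by two bounding text positions, uses slow LCE queries of the in-place Monte Carlo structure for each comparison, picks a uniformly random pivot in the window, computes its rank by a full scan, and recurses for $\bigO(\log n)$ rounds of cost $\bigO(n\log^2 n)$ each. The only (immaterial) difference is that you draw the pivot by rejection sampling over text positions, whereas the paper draws a uniform rank $r\in[1,m]$ and scans for the $r$-th in-window suffix; both yield a uniform pivot and the same $\bigO(n\log^3 n)$ bound.
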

\begin{proof}
	We build  our Monte Carlo structure using the in-place construction algorithm of Section \ref{sec:in-place} ($\bigO(n)$ expected time). In our procedure below, we use slow LCE queries to lexicographically compare pairs of suffixes.
	We scan $T$ and find the lexicographically smallest and largest text suffixes $i_{min}$, $i_{max}$ in $\bigO(n\log^2n)$ time\footnote{i.e. for every $0\leq j < n$ we compare the $j$-th suffix with the $i_{min}$-th and $i_{max}$-th suffixes and determine whether $j$ is the new $i_{min}$ or $i_{max}$. At the beginning, we can start with $i_{min} = min_{lex}(0,1)$ and $i_{max} = max_{lex}(0,1)$.}. Then, we scan again $T$ and count the number $m$ of suffixes inside the lexicographic range $[i_{min}, i_{max}]$ (at the beginning, $m=n$). We pick a uniform random number $r$ in $[1,m]$, scan again the text, and select the $r$-th suffix $i_r$ falling inside lexicographic range $[i_{min}, i_{max}]$ that we see during the scan. Finally, we recurse on $[i_{min},i_r]$ or $[i_r,i_{max}]$ depending on where the $i$-th smallest suffix falls (as in quick select, we repeat the process of choosing $r$ until we recurse on a set with at most $3m/4$ elements. Since $i_r$ is uniform inside range $[i_{min}, i_{max}]$, we need to pick at most $\bigO(1)$ values of $r$ until this condition is satisfied). We perform at most $\bigO(\log n)$ recursive steps, therefore the expected running time of our algorithm is $\bigO(n\log^3n)$.	
\end{proof}

\section{Conclusions and open problems}

In this paper we have described the first in-place solutions to the following problems: (i) sparse suffix-sorting, (ii) sparse LCP array construction, and (iii) suffix selection. To achieve our results, we have introduced a novel \emph{implicit} LCE data structure whose space is only a constant number of memory words higher than that of the text. The text can be replaced in-place and linear time with our data structure; this transformation is a general and powerful tool that --- in particular --- can be used to solve in-place problems (i-iii). 

Interestingly, previous strategies solving in-place the suffix sorting problem~\cite{franceschini2007place,goto2017optimal,li2016optimal} re-use only the space of the suffix array (through integer compression) to achieve constant working space. In this work, we re-use \emph{both} the space of the text (replacing it with a more powerful representation) and of the sparse suffix array (through integer compression) to achieve the more ambitious goal of sorting in-place any subset of text positions. It is therefore natural to ask whether or not it is possible to achieve the same goal without overwriting the text.

Note that only our algorithms for computing in-place the \emph{full} LCP and suffix arrays return always the correct result. Our other solutions return the correct answer with high probability only.
One  important improvement over our work could therefore be that of obtaining Las Vegas versions of our algorithms. Crucially, this should be achieved without increasing working space (since asymptotically optimal solutions are already known). One possible line of attack  could be to devise an in-place procedure for checking the correctness of a sparse suffix array. Alternatively, one could try to verify that the Karp-Rabin hash function used in our LCE data structure is collision-free on text substrings relevant to the suffix-sorting procedure. To this extent, we note that the $b$ input text positions could be used to encode extra information needed in the de-randomization process; we already showed that this is possible if $b=n$. The case $n - b = \omega(1)$ remains an intriguing open problem.

\subsection{Acknowledgments}

I wish to thank Tomasz Kociumaka and anonymous reviewers for useful comments on a preliminary version of the paper.


\bibliographystyle{plain}
\bibliography{40}

\end{document}